\definecolor{darkred}  {rgb}{0.5,0,0}
\definecolor{darkblue} {rgb}{0,0,0.5}
\definecolor{darkgreen}{rgb}{0,0.5,0}
\newcommand{\be}{\begin{equation}}
\newcommand{\ee}{\end{equation}}
\newcommand{\ba}{\begin{array}}
\newcommand{\ea}{\end{array}}
\newcommand{\bea}{\begin{eqnarray}}
\newcommand{\eea}{\end{eqnarray}}
\newcommand{\ra}{\rangle}
\newcommand{\la}{\langle}
\newcommand{\calH}{{\cal H }}
\newcommand{\calF}{{\cal F }}
\newcommand{\calQ}{{\cal Q }}
\newcommand{\calU}{{\cal U }}
\newcommand{\Cbb}{\mathbb{C}}
\newcommand{\Sbb}{\mathbb{S}}
\newcommand{\Ubb}{\mathbb{U}}
\newcommand{\Zbb}{\mathbb{Z}}
\newtheorem{dfn}{Definition}
\newtheorem{lemma}{Lemma}
\newtheorem{fact}{Fact}
\newtheorem{conj}{Conjecture}
\newtheorem{corol}{Corollary}
\begin{document}

%\title{Building Character:\\  Computing $S_n$-characters with Tensor-Networks and Quantum Circuits}

\title{Classical and quantum algorithms for  characters of the symmetric group}

\author{Sergey Bravyi}
\affiliation{IBM Quantum, IBM T.J. Watson Research Center, Yorktown Heights, NY 10598 (USA)}
\author{David Gosset}
\affiliation{Department of Combinatorics and Optimization and Institute for Quantum Computing, University of Waterloo}
\affiliation{Perimeter Institute for Theoretical Physics}
\author{Vojtech Havlicek}
\affiliation{IBM Quantum, IBM T.J. Watson Research Center, Yorktown Heights, NY 10598 (USA)}
\author{Louis Schatzki}
\affiliation{University of Illinois Urbana Champaign, Urbana, Illinois 61801 (USA)}

\begin{abstract}
%DG: "the group theory"-->"group theory"
Characters of irreducible representations are  ubiquitous  in group  theory. 
However, computing  characters of some groups such as the symmetric group $S_n$ is a challenging  problem  known to be \#P-hard in the worst case.
Here we describe a Matrix Product State  (MPS) algorithm for  characters of $S_n$. The algorithm computes an MPS  encoding all irreducible characters of a given permutation.
 It relies on a mapping from  characters of $S_n$  to quantum spin chains  proposed by
Crichigno and Prakash. We also provide a simpler derivation of this mapping.
%SBB: small changes in this sentence
We complement this result by presenting a $poly(n)$ size quantum circuit that prepares the corresponding MPS
obtaining an efficient quantum algorithm for certain sampling 
 problems based on characters of $S_n$. To assess classical hardness of these
problems we present a general reduction from strong simulation (computing a given probability)
to weak simulation (sampling with a small error).  
This reduction applies to any sampling problem with a certain granularity structure and may be of independent interest. 
\end{abstract}

\maketitle

\section{Introduction}

Representation theory provides a  powerful set of tools for analyzing  systems invariant under the action of some group
such as a molecular symmetry group  in chemistry or a gauge group in quantum field theory.
For example, the eigenvalue spectrum of a quantum system is usually computed  by decomposing the Hilbert space
 into a direct sum of irreducible representations of the symmetry group and diagonalizing each irreducible
block of the Hamiltonian separately. 

The character of an irreducible representation (irrep)  can be viewed as its fingerprint that uniquely identifies
the irrep and facilitates many computational tasks. 
In the case of abelian groups such as the symmetry group of a periodic lattice, characters
are  basis functions of the Fourier transform associated with a given frequency or momentum.
In general,  the character of a group element $g$
is defined as the trace of a matrix describing the action of $g$ for a given irrep.

Here we address the problem of computing characters of the symmetric group $S_n$, that is, the group of all possible permutations of $n$ objects.
Symmetric groups $S_n$ with small values of $n$ often emerge as molecular symmetries in chemistry. For example, $S_4$ and $S_5$ describe
molecules with tetrahedral and  icosahedral symmetries  such as the methane and the fullerene.
The group $S_6$ 
is isomorphic to the symplectic group 
describing the action of two-qubit Clifford gates commonly used in quantum computation~\cite{grier2020interactive}. 
Moreover,
Cayley's theorem asserts that any finite group is isomorphic to a subgroup of $S_n$~\cite{johnson1971minimal}.

%DG: "the runtime"-->"runtime"
Unfortunately, characters of $S_n$ lack a simple analytic formula. In fact, it was shown~\cite{hepler1996complexity,Ikenmeyer24} that the problem of computing 
the character of a given permutation $g\in S_n$ for a given irrep  of $S_n$ is \#P-hard, that is, at least
as hard as counting the number of solutions of an NP-hard problem.  While this leaves little hope for an efficient algorithm with runtime scaling polynomially with $n$,
one may ask how well can we
do in practice and whether it is possible to compute characters of $S_n$ for moderately large values of $n$, say 
$n\le 100$ ?
%SBB3: changed 50 to 100

To motivate this problem, we note that 
characters of $S_n$  with large values of $n$ describe  amplitudes of Laughlin states in the Fractional Quantum Hall Effect~\cite{dunne1993slater,di2017unified,di2015,di1994laughlin}.
Having access to characters of $S_n$ enabled computation of the electron density and correlation functions of Laughlin states~\cite{dunne1993slater}.
Another application is computing  integrals of high-degree polynomials over the unitary group~\cite{collins2006integration}.
Such integrals can be computed using Weingarten functions which depend on $S_n$ characters~\cite{collins2006integration}.
Algorithms for $S_n$ characters may also facilitate evaluation of kernel functions for solving ranking problems in machine learning~\cite{kondor2008group,kondor2010ranking}.

Our first contribution is an algorithm for computing characters of $S_n$ based on Matrix Product States (MPS).
These are tensor networks commonly used for simulating ground state properties of quantum spin chains~\cite{vidal2003efficient}.
A deep connection between  characters of $S_n$  and quantum spin chains was recently 
established by Crichigno and Prakash~\cite{crichigno2024quantumspinchainssymmetric}.
These authors showed that characters of a given permutation $g\in S_n$ associated with all possible irreps of $S_n$
can be read off as amplitudes of a certain quantum state $|\psi_g\ra$ describing $2n$ fermionic modes.
The state $|\psi_g\ra$ is defined in terms of current operators
\[
J_\ell = \sum_{i=1}^{2n-\ell} a_{i+\ell}^\dag a_i,
\]
where $\ell$ is an integer and $a_i^\dag$ ($a_i$) are fermionic creation (annihilation) operators. Specifically,
\[
|\psi_g\ra = \prod_{j=1}^{c(g)} J_{\nu_j} |1^n0^n\ra,
\]
where $c(g)$ is the number of cycles in $g$, $\nu_j$ is the length of the $j$-th cycle,
and $|1^n0^n\ra$ is the Fock basis state with the first $n$ modes occupied and remaining modes empty.
The order in the product does not matter since the current operators pairwise commute~\cite{crichigno2024quantumspinchainssymmetric}.
For each irrep $\lambda$ of $S_n$, the Crichigno and Prakash mapping gives a Fock basis vector $x_\lambda \in \{0,1\}^{2n}$ such that the
amplitude $\la x_\lambda |\psi_g\ra$ coincides with the character of $g$ associated with $\lambda$.
Section~\ref{sec:spin_chain} provides a simplified derivation of  Crichigno and Prakash mapping.
Our derivation may be slightly more accessible to physicists as it avoids the heavy machinery of non-commutative symmetric polynomials used in 
the original work~\cite{crichigno2024quantumspinchainssymmetric, fominGreeneNoncommutative98}.
%DG: Below and throughout the paper, I changed "Slater determinant" to "Slater determinant"
Instead, we apply an elegant  formula derived by 
Frobenius in 1900~\cite{Frob1900} to express the characters in terms of fermionic Slater determinants.
Rewriting this formula using the fermionic 
second quantization formalism reproduces the result of~\cite{crichigno2024quantumspinchainssymmetric}.

%DG:"with the bond dimension"-->"with bond dimension"
The standard Jordan Wigner transformation maps $2n$ fermionic modes to $2n$ qubits.
We show that  the  image of the current operator $J_\ell$  under this mapping is a Matrix Product
Operator (MPO) with bond dimension $\ell+2$.
Applying a sequence of MPOs to an initial product state is a common subroutine in quantum spin chain simulations
and a variety of highly optimized software libraries for this task is available  such as $\mathsf{quimb}$~\cite{gray2018quimb} or $\mathsf{mpnum}$~\cite{suess2017mpnum}.
Our problem is particularly well suited for MPO-MPS simulation because characters of $S_n$ are integer valued.
Accordingly, a small error introduced by truncating singular values in the intermediate MPS
can be tolerated. A detailed description of our algorithm and numerical experiments can be found in Section~\ref{sec:classical}.

As mentioned above, computing $S_n$ characters is  a \#P-hard problem.
Thus, neither classical nor quantum algorithms for this problem are expected to run in time $poly(n)$.
The main obstacle for the classical MPS algorithm is a rapid growth of entanglement  
which results in an exponentially large MPS bond dimension. A natural question is whether a quantum computer
can efficiently prepare the state $|\psi_g\ra$ ? Surprisingly, we show that the answer is YES.
To this end, we construct a quantum circuit of size $poly(n)$ that prepares the uniform superposition of all permutations  in the conjugacy class of $g$.
We show that applying the Quantum Fourier Transform (QFT) over $S_n$ to this superposition gives the desired state $|\psi_g\ra$, with a suitable normalization coefficient.
Importantly,  the QFT   over $S_n$ can be realized by a quantum circuit of size $poly(n)$, as was shown by Beals~\cite{beals1997quantum}.
%DG: Edited the next two paragraphs for clarity

Using a similar technique, we describe a $poly(n)$ sized quantum circuit that, for a given irrep $\lambda$ of $S_n$, prepares a quantum state $|\omega_{\lambda}\rangle$ which is a superposition of all permutations $g\in S_n$ with amplitudes proportional  to the character of $g$ for irrep $\lambda$. By preparing and then measuring either $|\psi_{g}\rangle$ or $|\omega_{\lambda}\rangle$, a quantum computer can efficiently {\em sample} either irreps or permutations with probability proportional to the character squared. 

%We use similar technique to construct a $poly(n)$ quantum circuit that prepares a superposition of all permutations $g\in S_n$
%where amplitudes are proportional to the character of $g$ for a given irrep. 
%As a consequence, a quantum computer can efficiently {\em sample} 
%irreps or permutations with 
%the probability proportional to the  character squared. 

Finally, we ask whether or not these sampling problems admit efficient classical algorithms. 
%Our last contribution is an attempt at proving classical hardness of sampling problems based on characters of $S_n$.
To this end we give a reduction from weak simulation (sampling with a small statistical error)
 to strong  simulation (exact computation of probabilities)
for a broad class of sampling problems that posses  a certain granularity structure. 
Namely, if $P$ is a probability distribution on some set $\Omega$, we say that $P$ is {\em granular} if
$|\Omega|\cdot P(x)$ is integer for all $x\in \Omega$.  
Our weak-to-strong reduction applies to any granular (or approximately granular) distribution and does not require
the anti-concentration property which is commonly used in reductions from weak to strong simulation~\cite{aaronson2011computational,bremner2016average,hangleiter2023computational}.
Sampling problems based on characters of $S_n$ are shown to be 
 (approximately) granular due to the fact that $S_n$ characters are integer-valued. 
Using our reduction we show that classical hardness of sampling from the measured distribution of $|\psi_g\rangle$ for permutations $g$ with $O(\sqrt{n}/\log(n))$ cycles follows from a conjecture concerning average case hardness of computing the group characters associated with $g$. Establishing or refuting this conjecture is left for future work.

Similarly to characters of the symmetric group, we give an MPS algorithm for Kostka numbers. The first step of mapping Kostka numbers onto quantum spin chains was shown in ~\cite{crichigno2024quantumspinchainssymmetric}. We give a corresponding MPO construction and code for running this algorithm in the final secton of this paper.

%SBB: shortened this paragraph and added some more citations
{\em Related work.}
Most of the existing classical algorithms for computing $S_n$ characters use variants of a recursive formula known as
the Murnaghan–Nakayama rule~\cite{murnaghan1937characters,nakayama1940some}. Algorithmic versions of this formula can be found e.g. in~\cite{hepler1996complexity,eugeciouglu1985algorithms}.
%SBB: shortened the discussion of Jordan's work
An efficient classical algorithm 
that computes a small additive error approximation of 
$S_n$ characters normalized by the irrep dimension
is described  by Jordan~\cite{jordan2009fastquantumalgorithmsapproximating}. 
Classical and quantum 
algorithms for computing multiplicity coefficients
closely related to $S_n$ characters were studied in \cite{bravyi2024quantum, larocca2024quantumalgorithmsrepresentationtheoreticmultiplicities}. 
%SBB: this sounds too technical for a phyiscal journal
%Ref. \cite{larocca2024quantumalgorithmsrepresentationtheoreticmultiplicities} presents a similar algorithm to Jordan's for Kostka numbers, but it remains an interesting open questions whether there is a similar simulation strategy for Littlewood-Richardson, Kronecker and Plethysm coefficients.
% VH: OK :)

{\em Organization of the paper:}
Section~\ref{sec:background} provides a necessary background on group representation theory. 
Section~\ref{sec:spin_chain} gives a simplified derivation of  Crichigno and Prakash mapping
from characters of $S_n$ to quantum spin chains. Sections~\ref{sec:classical} and~\ref{sec:quantum}
contain our main results —
an MPS algorithm for characters of $S_n$ and a quantum circuit that prepares the corresponding MPS.
Sampling problems for granular probability distributions and  reductions from weak-to-strong simulation 
are discussed in 
Section~\ref{sec:granular}. An extension of our MPS algorithm to computing Kostka numbers is given in Section~\ref{sec:Kostka}.

\section{Group theory refresher}
\label{sec:background}

Suppose $G$ is a finite group. 
A $d$-dimensional representation of $G$ is a group homomorphism 
from $G$ to the unitary group $\Ubb(d)$. A representation  is  irreducible if the ambient Hilbert space $\Cbb^d$ contains no proper subspace
invariant under the action of all group elements. We use the shorthand irrep to denote an irreducible representation. 
It is standard to identify irreps that differ only by a  basis change in the ambient Hilbert space.
Then any finite group has a finite set of irreps.
Given a group element $g\in G$ and an  irrep $\lambda$, let 
$\rho_\lambda(g)$ be the unitary matrix corresponding to $g$.
The character of $g$ associated with an irrep $\lambda$ is defined 
as a complex number $\chi_\lambda(g) = \mathrm{Tr}(\rho_\lambda(g))$.
The fact that $\rho_\lambda$ is a group homomorphism implies that $\chi_\lambda(g)$ depends only 
on the conjugacy class of $g$, that is, $\chi_\lambda(g)=\chi_\lambda(hgh^{-1})$ for all $g,h\in G$.
Organizing all characters  into a matrix with rows labelled by irreps 
and columns labelled by conjugacy classes of $G$ gives the character table of $G$.

The symmetric group $S_n$ is the group
of permutations of $n$ objects, say integers $[n]=\{1,2,\ldots,n\}$.
Irreps and conjugacy classes of $S_n$ are labeled by partitions of $n$,
that is,  sorted sequences of positive integers 
$\lambda=(\lambda_1,\lambda_2,\ldots,\lambda_m)$
such that  $\lambda_1\ge \lambda_2\ge \ldots\ge \lambda_m$ and
$\sum_{j=1}^m \lambda_j= n$. 
Here $m$ is the number of parts in $\lambda$.
We shall write $\lambda \vdash n$ to indicate that $\lambda$ is a partition of $n$.
A permutation $g\in S_n$ is said to be  a  cycle of length $\ell$  if there exists a subset $J\subseteq [n]$ 
such that $|J|=\ell$ and $g$ cyclically shifts elements of $J$ while acting trivially on $[n]\setminus J$.
Any permutation $g\in S_n$ can be uniquely written as a composition of cycles with pairwise disjoint supports.
Denoting  the number of cycles by $c(g)$ and  the length of the $j$-th longest  cycle  by $\nu_j$ one obtains
a partition  of $n$ with $c(g)$ parts, namely, $\nu=(\nu_1,\nu_2,\ldots,\nu_{c(g)})$. 
The partition $\nu$ specifies the conjugacy class of $S_n$ that contains $g$.
Irreps of $S_n$ are also labeled by partitions of $n$. For example, a partition 
$\lambda=(n)$ with a single part
labels the trivial representation
while a partition
$\lambda=(1,1,\ldots,1)$ with $n$ parts
labels the sign representation.
The general recipe for constructing the unitary action $\rho_\lambda$ for a given partition $\lambda$ can be found in~\cite{etingof2011introduction}.
The explicit form of matrices $\rho_\lambda(g)$ is irrelevant for our purposes
since we are only interested in the characters $\chi_\lambda(g) = \mathrm{Tr}(\rho_\lambda(g))$.

\section{Building the character}
\label{sec:spin_chain}

We begin by reproducing the result of Crichigno and Prakash~\cite{crichigno2024quantumspinchainssymmetric}
that shows how to build a quantum spin chain encoding all characters $\chi_\lambda(g)$ of a given permutation $g\in S_n$.

Let $\lambda=(\lambda_1,\ldots,\lambda_m)$ and $\nu=(\nu_1,\ldots,\nu_{c(g)})$ be the partitions of $n$ describing the 
irrep $\lambda$ and the conjugacy class of $g$.
Consider a complex vector $z\in \Cbb^m$ and a polynomial
%DG: I think the dummy variable was used twice here and should only appear once. I changed one of the dummy variables to k
\be
P_\nu(z)=\prod_{j=1}^{c(g)} \left(\sum_{k=1}^m z_k^{\nu_j}\right).
\ee
Frobenius character formula~\cite{Frob1900} (see e.g. Theorem 4.47 in~\cite{etingof2011introduction}) states that $\chi_\lambda(g)$ coincides with the coefficient of a monomial 
\be
\label{z(lambda)}
z(\lambda)= \prod_{j=1}^m z_j^{\lambda_j + m - j}
\ee
in the polynomial
\be
\label{Q(z)}
Q(z) =P_\nu(z) \prod_{1\le i<j\le m} (z_i - z_j).
\ee
Let 
$\Sbb^1=\{x\in \Cbb\, :\, |x|=1\}$ be the unit circle in the complex plane
and  $\calF_1$ be a Hilbert space of functions $f\, : \, \Sbb^1\to \Cbb$
with the inner product 
\[
\la f|g\ra = \frac1{2\pi} \int_0^{2\pi} d\theta f(e^{i\theta})^* g(e^{i\theta})
\]
and an orthonormal basis of functions $\{f(x)=x^k\}_{k\in \Zbb}$.
Informally, $\calF_1$ describes a
particle moving on the circle  and $f(x)=x^k$ is a state with the angular  momentum $k$.
Consider now a many-body system composed of $m$ indistinguishable  particles
moving on the circle 
that obey fermionic statistics.
Fermionic first quantization principle stipulates that the
$m$-particle Hilbert space $\calF_m$ is the anti-symmetric subspace of $\calF_1^{\otimes m}$
and its  orthonormal basis  is formed by Slater determinants composed of $m$ distinct single-particle basis functions. Such Slater determinants
have a form $|\kappa\ra= D_\kappa(z)/\sqrt{m!}$, where
\[
D_\kappa(z) = \mathrm{det}\left[ \ba{cccc}
z_1^{\kappa_1} & z_1^{\kappa_2} & \ldots & z_1^{\kappa_m} \\
z_2^{\kappa_1} & z_2^{\kappa_2} & \ldots & z_2^{\kappa_m} \\
\vdots & \vdots & & \vdots \\
z_m^{\kappa_1} & z_m^{\kappa_2} & \ldots & z_m^{\kappa_m} \\
\ea
\right],
\]
and  $\kappa$ runs over strictly decreasing
integer sequences of length $m$ that represent momenta of the occupied single-particle states.
The Vandermonde determinant formula gives
\be
\label{Vandermonde}
\prod_{1\le i<j\le m} (z_i-z_j) = \sqrt{m!} |\tau\ra,
\ee
where
%DG: added period after equation
\[
\tau = (m-1,m-2,\ldots,1,0).
\]
%VH: Note that tau is the same as Frobenius "s hift"; Etingof calls it rho in 4.47
Let $P_\nu\,: \, \calF_m\to \calF_m$ be a linear operator that maps a function $\Psi(z)\in \calF_m$ to $P_\nu(z) \Psi(z)$.
Note that $P_\nu$ preserves $\calF_m$ since $P_\nu(z)$ is a symmetric polynomial. 
We conclude that the polynomial $Q(z)$ defined in Eq.~(\ref{Q(z)}) describes
a fermionic $m$-particle (unnormalized) state 
\be
|Q\ra=\sqrt{m!} P_\nu |\tau\ra.
\ee
The coefficient of the monomial $z(\lambda)$ in $Q(z)$ coincides with 
$\la D_{\lambda+\tau}|Q\ra/m!$, where 
\[
\lambda+\tau = (\lambda_1+\tau_1,\ldots,\lambda_m+\tau_m).
\]
Indeed,  $Q(z)$ is anti-symmetric under the exchange of any pair of variables
while  $D_{\lambda+\tau}(z)/m!$ is the anti-symmetrized version of the monomial $z(\lambda)$
defined in Eq.~(\ref{z(lambda)}).
Thus, the Frobenius character formula can be expressed as
\be
\label{Frob1}
\chi_\lambda(g) = \la D_{\lambda+\tau}|Q\ra/m!=
\la \lambda+\tau | P_\nu |\tau\ra.
\ee

The next step is to rewrite Eq.~(\ref{Frob1}) using the fermionic second quantization formalism.
To this end consider an infinite chain of qubits labeled by integers $k\in \Zbb$. Let $X_k,Y_k,Z_k$
be the Pauli operators acting on the $k$-th qubit. Define a fermionic annihilation operator
\be
\label{JordanWigner}
a_k =\frac12 (X_k+iY_k) \prod_{j<k}  Z_{j}
\ee
and let $a_k^\dag$ be the corresponding creation operator.
Here we used the Jordan-Wigner fermion-to-qubit mapping.
The operators $a_k,a_k^\dag$ obey the Fermi-Dirac commutation rules, that is, $a_i a_j =-a_j a_i$ and
$a_i a_j^\dag + a_j^\dag a_i = \delta_{i,j}I$ for all $i,j$.
Let $\calH$ be the Hilbert space of the chain. It has  basis vectors $|x\ra$,
where  $x\in \{0,1\}^{\Zbb}$ is a bit string with a finite support.

Given a finite strictly decreasing integer sequence $\kappa$, let $x_\kappa\in  \{0,1\}^{\Zbb}$
be a bit string with the support $\kappa$.
The Slater determinant basis state $|\kappa\ra=D_\kappa(z)/\sqrt{m!}$
in the first quantization is represented by the state $|x_\kappa\ra\in \calH$
in the second quantization. 

Multiplication of a  wave function by the polynomial $\sum_{j=1}^m z_j^\ell$
in the first quantization picture shifts the momentum of each particle by $\ell$.
Such shift is described by  an operator 
\be
\label{Jell}
J_\ell = \sum_{k\in \Zbb}  a_{k+\ell}^\dag a_k
\ee
in the second quantization picture. 
Following~\cite{crichigno2024quantumspinchainssymmetric}, we shall refer to $J_\ell$ as the current operator. 
Now the  Frobenius character formula can be expressed as
%DG: fixed notation x(\tau) --> x_{\tau}
\be
\label{Frob2}
\chi_\lambda(g) = \la x_{\lambda+\tau} |\prod_{j=1}^{c(g)} J_{\nu_j}  |x_\tau\ra.
\ee
The order in the product does not matter since the current operators pairwise commute.
At this point we have mapped $S_n$ characters to an infinite spin chain.
We can truncate the chain to a block of $2n$ consecutive qubits using the following lemma.
\begin{lemma}
\label{lemma:truncate}
Suppose $x,y\in \{0,1\}^{\Zbb}$ are bit strings and $i^*$ is an integer 
such that $x_i=y_i$ for $i\le i^*$.
Let $x',y'$  be bit strings obtained from $x,y$ respectively
 by setting to zero all bits $i\le i^*$.  Then 
\be
\label{truncate}
\la y |\prod_{j=1}^{c(g)} J_{\nu_j} |x\ra = \la y'| \prod_{j=1}^{c(g)} J_{\nu_j} | x'\ra.
\ee
\end{lemma}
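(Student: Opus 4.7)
The plan is to exploit the fact that each current operator $J_\nu$ with $\nu>0$ can only move particles to the right. Splitting the chain into a ``left'' region $L=\{k\in\Zbb:k\le i^*\}$ and a ``right'' region $R=\{k\in\Zbb:k>i^*\}$, I decompose
\[
J_\nu = J_\nu^{LL}+J_\nu^{LR}+J_\nu^{RR},
\]
where $J_\nu^{XY}$ collects those terms $a_{k+\nu}^\dag a_k$ in Eq.~(\ref{Jell}) with $k\in X$ and $k+\nu\in Y$; the would-be fourth piece $J_\nu^{RL}$ is absent because $\nu>0$. The strategy is to expand $\prod_j J_{\nu_j}$ into $3^{c(g)}$ summands of the form $\prod_j J_{\nu_j}^{X_jY_j}$ and argue that only the all-$RR$ summand can contribute to $\la y|\prod_j J_{\nu_j}|x\ra$ under the hypothesis $x_L=y_L$.

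The reduction rests on two elementary counting arguments. First, each $J_\nu^{LR}$ factor lowers the number of occupied sites in $L$ by one, whereas $J_\nu^{LL}$ and $J_\nu^{RR}$ preserve it; since $|x_L|=|y_L|$, any surviving summand must contain no $J^{LR}$ factors. Second, every nonzero action of $J_\nu^{LL}$ on a basis state strictly increases the position sum $\sum_{k\le i^*} k\, x_k$, while $J_\nu^{RR}$ leaves $L$ untouched. Because $J^{LL}$ and $J^{RR}$ factors commute (they are supported on disjoint index sets and each has even fermionic degree), the resulting $L$-configuration depends only on the $J^{LL}$ factors; hence any summand with at least one $J^{LL}$ factor produces an $L$-configuration whose position sum strictly exceeds that of $x_L=y_L$, and its overlap with $|y\ra$ vanishes.

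Having reduced to $\prod_j J_{\nu_j}^{RR}$, I would finish by a straightforward factorization. Each summand $a_{k+\nu}^\dag a_k$ of $J_\nu^{RR}$ has both indices in $R$ and even fermionic degree, so it commutes with every creation operator $a_k^\dag$ supported on $L$ and acts as the identity on the $L$-sites. Combined with $\la y_L|x_L\ra=1$, this yields
\[
\la y|\prod_j J_{\nu_j}|x\ra \;=\; \la y_R|\prod_j \tilde J_{\nu_j}^{RR}|x_R\ra,
\]
where $\tilde J_\nu^{RR}$ is the restriction of $J_\nu^{RR}$ to $R$. Applying the same decomposition to $|x'\ra$ and $|y'\ra$ (whose $L$-regions are empty, so $J_\nu^{LL}$ and $J_\nu^{LR}$ automatically annihilate them) gives the same right-hand side, proving Eq.~(\ref{truncate}).

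The main point that requires care is the Jordan--Wigner sign bookkeeping in the factorization step, but because each summand of $J_\nu^{XY}$ has even fermionic degree, the relevant Jordan--Wigner strings commute past one another without introducing signs, so I expect this to be routine bookkeeping rather than a genuine obstacle.
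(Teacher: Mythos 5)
Your proof is correct and follows essentially the same route as the paper's: both reduce the product of current operators to the truncated operators $\tilde J_\ell=\sum_{k>i^*}a_{k+\ell}^\dag a_k$ (your $J_\ell^{RR}$) by arguing that, since particles only move rightward, any term that annihilates a site in $L$ produces a state orthogonal to $\la y|$, and then factor across the cut. The only difference is bookkeeping: the paper disposes of all left-touching terms at once by noting that the leftmost emptied site in $L$ can never be refilled, whereas you track two invariants (particle number in $L$ and position sum in $L$) separately; both are fine.
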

\begin{proof}
Let  $\tilde{J}_\ell = \sum_{k>i^*}  a_{k+\ell}^\dag a_k$.
One can easily check that 
\[
\prod_{j=1}^{c(g)} J_{\nu_j} |x\ra =\prod_{j=1}^{c(g)} \tilde{J}_{\nu_j} |x\ra + |\phi\ra,
\]
where $|\phi\ra\in \calH$ is a superposition of basis vectors $|z\ra$
such that
$z_i=0$ and $x_i=1$ for at least one qubit $i\le i^*$ (in the case that no sites $i \le i^*$ are occupied, $|\phi\ra$ is $0$). By assumption,
$x_i=y_i$ for all $i\le i^*$.
Thus $|\phi\ra$ is orthogonal to $|y\ra$, that is,
\be
\label{truncate1}
\la y |\prod_{j=1}^{c(g)} J_{\nu_j} |x\ra= 
\la y |\prod_{j=1}^{c(g)} \tilde{J}_{\nu_j} |x\ra.
\ee
Since $\tilde{J}_{\nu_j}$ acts trivially on qubits $i\le i^*$, we also have
\be
\label{truncate2}
\la y |\prod_{j=1}^{c(g)} \tilde{J}_{\nu_j} |x\ra = \la y' |\prod_{j=1}^{c(g)} \tilde{J}_{\nu_j} |x'\ra.
\ee
Applying Eq.~(\ref{truncate1}) again with $x',y'$ instead of $x,y$ proves Eq.~(\ref{truncate}).
\end{proof}
As a consequence of Lemma~\ref{lemma:truncate}, the matrix element in Eq.~(\ref{Frob2}) does not change 
if we set $m=n$ and pad $\lambda$ with $n-m$ zeros on the right. Accordingly, we can set
\[
\tau=(n-1,n-2,\ldots,1,0).
\]

It is immediate that any qubit $i<0$ or $i\geq 2n$ is set to $|0\ra$ in the initial state $|x_\tau \ra$. Note that $\prod_{j=1}^{c(g)}J_{\nu_j}$ can only move a fermion at most $n$ sites to the right. Thus, qubits $i<0$ and $i\ge 2n$ are unchanged in the simulation and can be removed. This leaves only $2n$ active qubits $i=0,1,\ldots,2n-1$.
% Any qubit $i<0$  or $i\ge 2n$ is set to $|0\ra$ in the initial state $|x_\tau\ra$ and remains in the state $|0\ra$ throughout
% application of the current operators (here we noted that an operator $J_{\nu_j}$ can shift the support of a bit string
% to the right at most by $\nu_j$ and $\sum_{j=1}^{c(g)} \nu_j=n$).
% Thus qubits $i<0$ and $i\ge 2n$ can be removed from the simulation. This leaves only $2n$ active qubits $i=0,1,\ldots,2n-1$. 
%LS: I slightly rewrote the first half of the above paragraph in the pursuit of clarity. The oroginal text is given in the comment above.
%DG: added the footnote describing the interpretation of the encoding of partitions
After this truncation the initial basis state $|x_\tau\ra$ is mapped to $|1^n0^n\ra$
while the final basis state $x_{\lambda+\tau}$ is mapped to a weight-$n$  bit string with the support $\{\lambda_1+n-1,\lambda_2+n-2,\ldots,0\}$ \footnote{It may be helpful to understand this binary encoding of a partition as follows.  If we start with the bit string $x_{\tau+\lambda}$ and then interpret each $1$ as ``up by one'' and each $0$ as ``to the right by one'' and then trace out a picture according to these instructions, we get a picture of the partition $\lambda$ within a bounding box of size $n\times n$.}.
Now the  Frobenius character formula becomes
\be
\label{Frob3}
\chi_\lambda(g) =
\la x_{\lambda+\tau} | \prod_{j=1}^{c(g)} J_{\nu_j}  |1^n 0^n\ra, 
\ee
where
\be
J_\ell = \sum_{k=0}^{2n-1-\ell} a_{k+\ell}^\dag a_k
\ee
This concludes the derivation of the Crichigno and Prakash mapping from characters of $S_n$ to
a quantum spin chain.

As pointed out in~\cite{crichigno2024quantumspinchainssymmetric}, the state $|\psi_g\ra =  \prod_{j=1}^{c(g)} J_{\nu_j}  |1^n 0^n\ra$ has 
support only on bit strings encoding partitions of $n$.
This can be  seen as follows. Let $x=(x_0,x_1,\ldots,x_{2n-1})$ be a bit string of length $2n$.
Let  $|x|=\sum_{j=0}^{2n-1} x_j$ be the weight of $x$
and  $m(x)=\sum_{j=0}^{2n-1} j x_j$ be the momentum of $x$.
For example,  $m(1^n0^n) = \sum_{j=0}^{n-1} j = n(n-1)/2$. We claim that $x$
encodes a partition of $n$ if and only if $x$ has weight $n$ and momentum $n(n+1)/2$.
Indeed, suppose $x=x_{\lambda+\tau}$ for some partition $\lambda \vdash n$.
Then $x$ has weight $n$ since we agreed that $\lambda$ has $n$ parts (including the padded zero parts).
By definition, $x$ has support $\{ \lambda_j+n-j\}_{j=1,\ldots,n}$.  From  $\sum_{j=1}^n \lambda_j=n$ one gets
\[
m(x) = \sum_{j=1}^{n}(\lambda_j+n-j) = n(n+1)/2.
\]
Conversely, suppose $x$ has weight $n$ and momentum $n(n+1)/2$.
Let $2n-1\ge d_1>d_2>\ldots>d_n\ge 0$ be the indices of non-zero bits of $x$.
Set $\lambda_j = j+d_j-n$ for $j=1,\ldots,n$.
From $d_j>d_{j-1}$ one gets  $n\ge \lambda_1\ge \ldots\ge \lambda_n\ge 0$.
Also we have $\sum_{j=1}^n d_j = m(x)$, that is,
\[
\sum_{j=1}^n \lambda_j = m(x)-n(n-1)/2 = n.
\]
Thus $\lambda$ is a partition of $n$.

Clearly, the current operators $J_{\nu_j}$ preserve the weight of bit strings. 
Each application of $J_{\nu_j}$ increases the momentum by $\nu_j$.
Since the $\sum_{j=1}^{c(g)} \nu_j= n$ and the initial bit string $1^n0^n$ has momentum $n(n-1)/2$,
the final state is a superposition of bit strings with  weight $n$ and momentum $n(n-1)/2 + n =n(n+1)/2$.
As shown above, these are exactly the bit strings encoding partitions of $n$.

\section{Classical MPS  algorithms}
\label{sec:classical}

%SBB: fixed an inconsistent notation for bitstrings x(lambda)
Consider a permutation $g\in S_n$. Let $c(g)$ be the number of cycles in $g$ and  $\nu_j$ be the length of the $j$-th cycle.
Our strategy is to represent the $2n$-qubit state 
\be
\label{psi_g}
|\psi_g\ra = \prod_{j=1}^{c(g)} J_{\nu_j}  |1^n0^n\ra
\ee
by a Matrix Product State (MPS). To this end we need the following lemma.
\begin{lemma}
\label{lemma:mpo}
The current operator $J_\ell$ is a Matrix Product Operator (MPO) with the bond dimension $\ell+2$.
\end{lemma}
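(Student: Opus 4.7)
The plan is to combine the Jordan--Wigner computation with a standard finite state machine construction of the MPO. First, I would explicitly rewrite each hopping term $a_{k+\ell}^\dag a_k$ as a product of local Pauli operators. Plugging in the definition from Eq.~(\ref{JordanWigner}), the Jordan--Wigner strings from $a_{k+\ell}^\dag$ and $a_k$ overlap on sites $0,1,\ldots,k-1$ and cancel there; after commuting $\tfrac{1}{2}(X_k+iY_k)$ through the remaining string and absorbing the resulting signs together with $\sigma^\pm Z$ identities on site $k$, one obtains
\be
\label{localForm}
a_{k+\ell}^\dag a_k=\sigma^-_k\,Z_{k+1}Z_{k+2}\cdots Z_{k+\ell-1}\,\sigma^+_{k+\ell},
\ee
where $\sigma^-=|0\ra\la 1|$ and $\sigma^+=|1\ra\la 0|$. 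So each summand in $J_\ell$ has a rigid pattern spanning $\ell+1$ consecutive sites: a $\sigma^-$, then $\ell-1$ copies of $Z$, then a $\sigma^+$.

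Next I would exhibit an MPO realizing the sum of these shifted pattern operators. I would introduce a finite state machine with states $\{0,1,2,\ldots,\ell,\ell+1\}$ interpreted as ``have not yet started a hopping term'' (state $0$), ``have applied $\sigma^-$ and $j-1$ copies of $Z$'' (state $j$ for $1\le j\le \ell$), and ``have finished a hopping term'' (state $\ell+1$). On each site, the allowed transitions carry the single-site operators
\begin{align*}
0\to 0:\; I, &\quad 0\to 1:\;\sigma^-,\\
j\to j+1\; (1\le j\le\ell-1):\;Z, &\quad \ell\to\ell+1:\;\sigma^+,\\
\ell+1\to\ell+1:\;I,&
\end{align*}
with all other transitions set to zero. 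These specifications define an $(\ell+2)\times(\ell+2)$ matrix of local operators $A^{[k]}$ on every site $k$, hence an MPO of bond dimension $\ell+2$.

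To close the argument I would contract this MPO from the left with the basis vector $e_0$ and from the right with $e_{\ell+1}$ and observe that the only nonzero paths through the FSM are those which sit in state $0$ for some prefix of sites $0,1,\ldots,k-1$ (producing $I$'s), make the transition $0\to 1$ at site $k$ (producing $\sigma^-_k$), move through states $1,2,\ldots,\ell$ on sites $k+1,\ldots,k+\ell-1$ (producing the $Z$ string), transition $\ell\to\ell+1$ at site $k+\ell$ (producing $\sigma^+_{k+\ell}$), and then stay in state $\ell+1$ on the remaining sites (producing $I$'s). Each allowed starting position $k\in\{0,1,\ldots,2n-1-\ell\}$ yields exactly the term in Eq.~(\ref{localForm}), so the contraction equals $J_\ell$.

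The only subtle step is the sign bookkeeping in deriving Eq.~(\ref{localForm}); everything else is mechanical once the local form of the summand is in hand. Boundary effects at the ends of the $2n$-qubit chain pose no difficulty because the initial and final FSM states already absorb arbitrary identity prefixes and suffixes.
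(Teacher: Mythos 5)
Your proposal is correct and follows essentially the same route as the paper: the local form $a_{k+\ell}^\dag a_k=\sigma^-_k Z_{k+1}\cdots Z_{k+\ell-1}\sigma^+_{k+\ell}$ matches the operator the paper obtains, and your finite-state-machine tensor with states $\{0,1,\ldots,\ell+1\}$ contracted against $e_0$ and $e_{\ell+1}$ is exactly the tensor $T$ the paper specifies in its figure. The FSM phrasing is just a more explicit, self-contained presentation of the same construction.
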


\begin{proof}
Fig.~\ref{fig:mpo} defines 
a tensor $T\in \Cbb^2 \otimes \Cbb^{\ell+2}\otimes \Cbb^{\ell+2} \otimes \Cbb^2$, where
 $\Cbb^2$ represent physical indices and
$\Cbb^{\ell+2}$ represent virtual indices of the MPO.
\begin{figure}[h]
  \includegraphics[height=2cm]{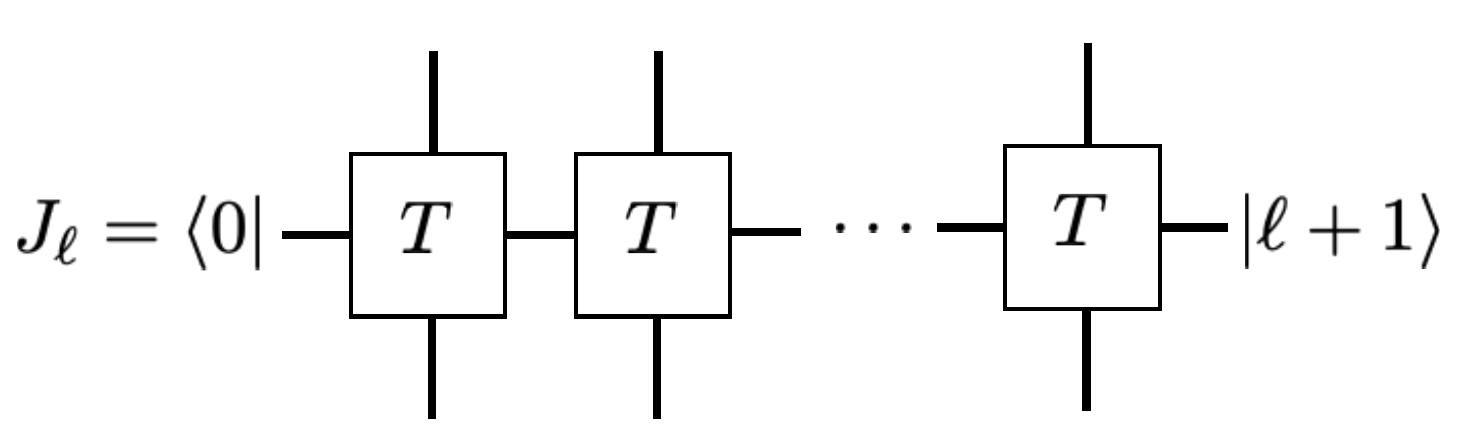}
         \includegraphics[height=4cm]{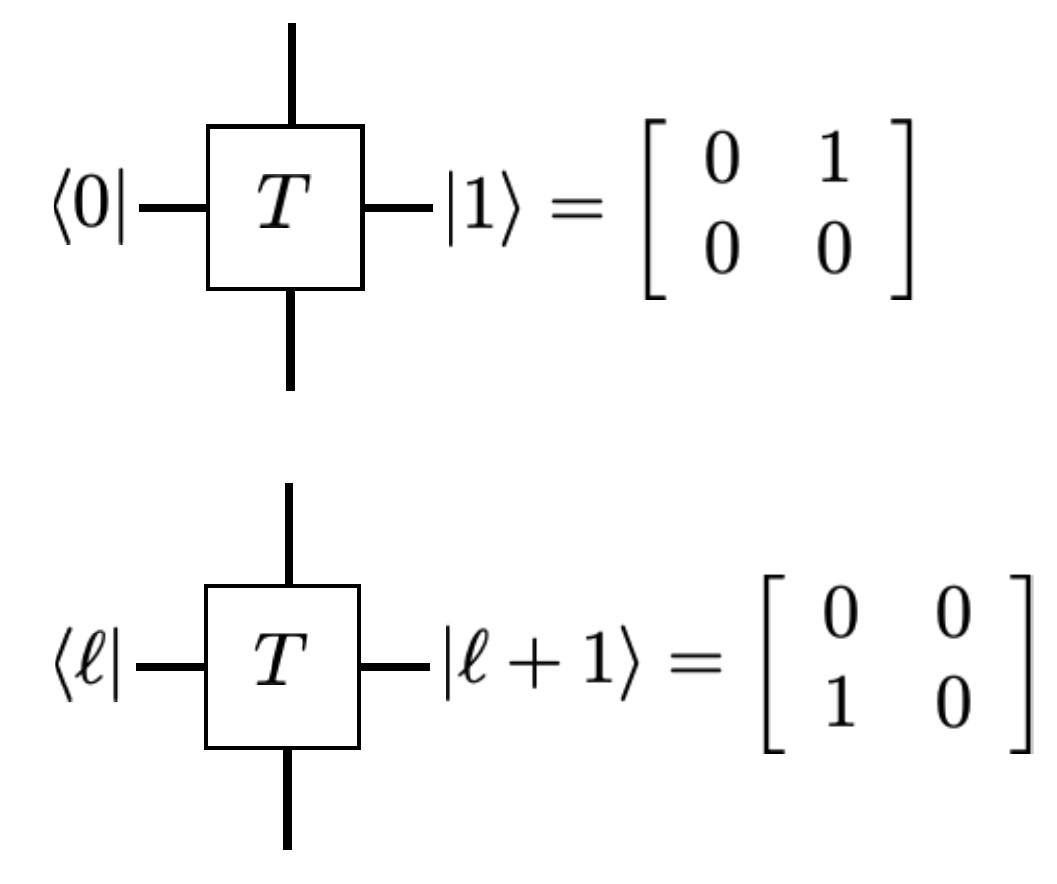}
          \includegraphics[height=4cm]{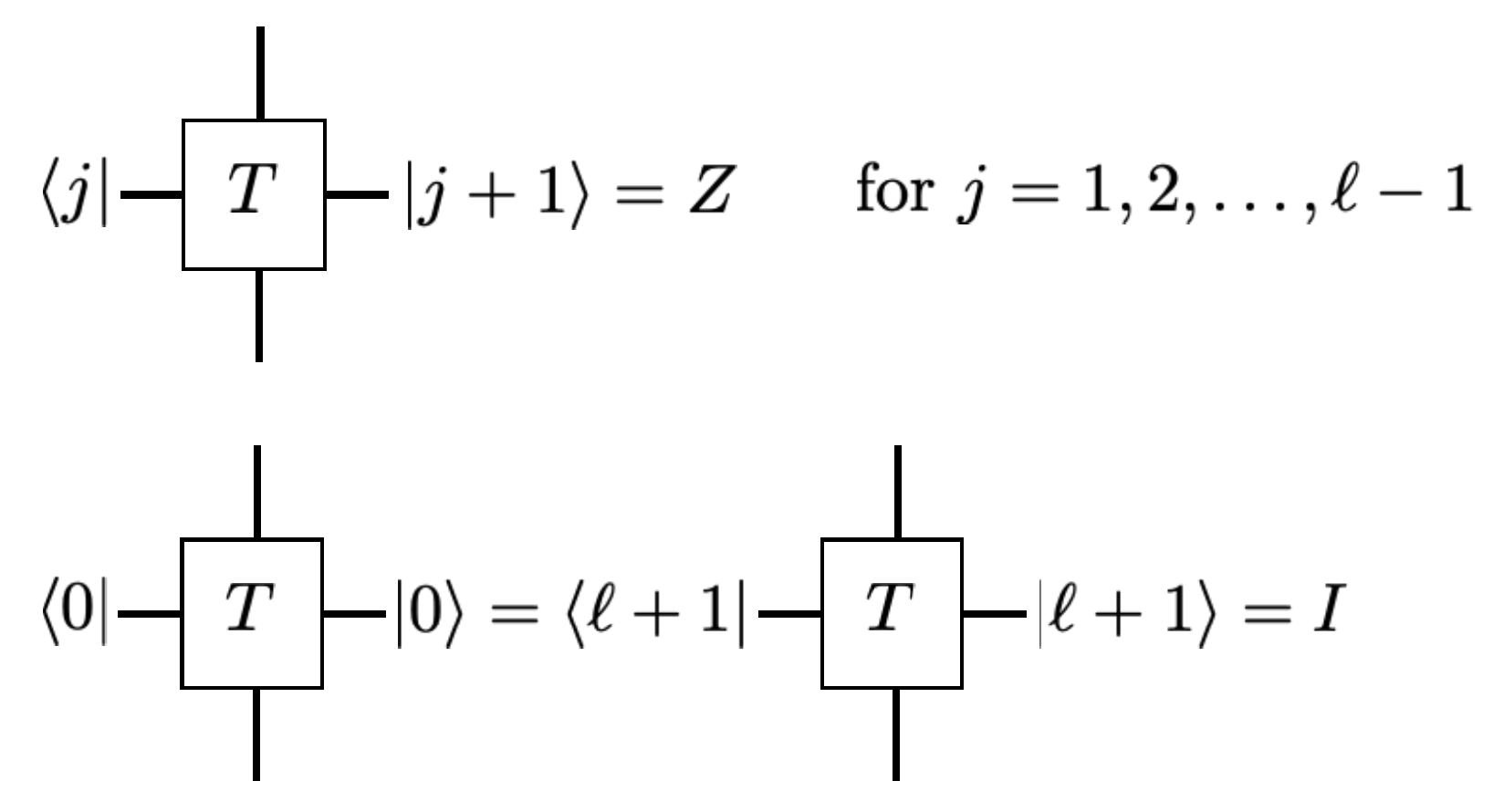}
     \caption{MPO representation of the current operator $J_\ell$. Horizontal lines
    represent virtual indices of dimension $\ell+2$. We label basis vectors of $\Cbb^{\ell+2}$ by 
     integers $\{0,1,\ldots,\ell+1\}$. The MPO consists of $2n$ copies of $T$. We list all nonzero components of the tensor $T$.}
         \label{fig:mpo}
 \end{figure}
 Taking $2n$ copies of $T$ and contracting all virtual indices
(horizontal lines) as shown on Fig.~\ref{fig:mpo} gives an operator
\[
\sum_{j=0}^{2n-1-\ell} 
\left[\begin{array}{cc}
0 & 1 \\
0 & 0 \\
\end{array}\right]_j Z_{j+1} \cdots Z_{j+\ell-1}  \left[\begin{array}{cc}
0 & 0 \\
1 & 0 \\
\end{array}\right]_{j+\ell}.
\]
This is the current operator defined in Eqs.~(\ref{JordanWigner},\ref{Jell}).
\end{proof}
The initial state $|1^n0^n\ra$ is clearly an MPS with bond dimension $1$.
Define a sequence of Matrix Product States
%SBB1: fixed a typo in the index of phi_j states
$|\phi_0\ra = |1^n0^n\ra$ and
\be
|\phi_{j}\ra = \mathrm{Compress}(J_{\nu_j}|\phi_{j-1}\ra, \epsilon)
\ee
for $j=1,\ldots,c(g)$. $\mathrm{Compress}(|\phi\ra,\epsilon)$ is a function that
computes the canonical form of an MPS $|\phi\ra$ 
 and truncates the singular values (Schmidt coefficients) at each site
as discussed in Section~4.5.1 of~\cite{schollwock2011density}.
The error tolerance $\epsilon\ge 0$ defines the  maximum relative error incurred while
truncating singular values. More precisely, suppose $s_1\ge s_2\ge \ldots \ge s_N$ are 
singular values of some MPS $|\phi_j\ra$ for some bipartite cut of the chain. We discard
singular values $s_i$ with $i\ge d$ where $d$ is the smallest index  such that 
\[
\sum_{i=d+1}^N s_i \le \epsilon \sum_{i=1}^N s_i.
\]
We chose to impose a bound on the relative truncation error since the considered states are unnormalized
and some singular values $s_i$ can be exponentially large. 
Let $D=D(\epsilon)$ be the maximum bond dimension of all states $|\phi_j\ra$ with
$j\le c(g)$.  Lemma~\ref{lemma:mpo} ensures that 
\be
\label{bond_dimension}
D \le  \prod_{j=1}^{c(g)} (\nu_j+2).
\ee
We do not expect this bound to be tight, even for lossless compression, $\epsilon=0$.
 For example, one can easily check that a state
$J_{\nu_1}|1^n0^n\ra$ has bond dimension only $\nu_1$ instead of $\nu_1+2$ predicted by Eq.~(\ref{bond_dimension}).
We empirically observed that the MPS bond dimension $D$ at the compression error $\epsilon=10^{-10}$  can be smaller
than the upper bound of Eq.~(\ref{bond_dimension}) by several orders of magnitude, see Fig.~\ref{fig:bond_dimension}.

Since the cost of MPO-MPS multiplication and MPS compression scales as $O(nD^3)$,
an MPS representation of $|\psi_g\ra$ can be computed in time $O(nc(g)D^3)=O(n^2 D^3)$.
Let 
\[
|\mathsf{mps}_g\ra = |\phi_{c(g)}\ra
\]
be the final MPS approximating $|\psi_g\ra$.
The character of any irrep $\lambda\vdash n$ then can be extracted by computing a single amplitude of the final MPS,
\be
\chi_\lambda(g) = \la x_{\lambda+\tau}|\psi_g\ra \approx  \la x_{\lambda+\tau}|\mathsf{mps}_{g}\ra.
\ee
This takes time $O(nD^3)$. 
%SBB: new paragraph
Suppose one needs to compute 
MPS amplitudes $\la x_{\lambda+\tau}|\psi_g\ra$ for all irreps $\lambda$.
To speedup this computation, we divided the spin chain into four 
blocks of length $n/2$ and cached products of the MPS tensors associated with each block. As one iterates over irreps $\lambda$, a product of the MPS tensors is retrieved from the cache, if available. We observed that the runtime for computing MPS amplitudes 
with the caching strategy described above
is comparable or smaller than the runtime for computing the MPS tensors.

Recall that $S_n$ characters are integer-valued. Thus the exact value of $\chi_\lambda(g)$ can be 
recovered by rounding the final MPS amplitude to the nearest integer as long as 
\be
\label{desired_approximation}
| \chi_\lambda(g)  -  \la x_{\lambda+\tau}|\mathsf{mps}_{g}\ra |<\frac12.
\ee
An error introduced by the compression can be estimated by comparing the norm of the exact state $|\psi_g\ra$ and the approximating
MPS. Orthogonality theorem for characters~\cite{etingof2011introduction} gives
\be
\|\psi_g\|^2 = \sum_{\lambda \vdash n} \chi_\lambda(g)^2 = |E_g|,
\ee
where $E_g = \{h\in S_n\, : \, hg=gh\}$ is the centralizer of $g$.
Denoting  the number of length-$\ell$ cycles in $g$ by $a_\ell$ one gets
\be
\label{centralizer_size}
|E_g|=\prod_{\ell=1}^n (a_\ell !) \ell^{a_\ell}.
\ee
%SBB: many changes in the rest of this section
We used $\mathsf{mpnum}$ software~\cite{suess2017mpnum} for numerical experiments
and set the compression error to  $\epsilon=10^{-10}$.
Runtime of the MPS algorithm is compared with the runtime of two widely used
computer algebra systems: $\mathsf{GAP}$~\cite{GAP4} and $\mathsf{symmetrica}$
library included in $\mathsf{SAGE}$~\cite{SAGE}, see Fig.~\ref{fig:runtime}.

\begin{figure}[h]
  \includegraphics[height=6cm]{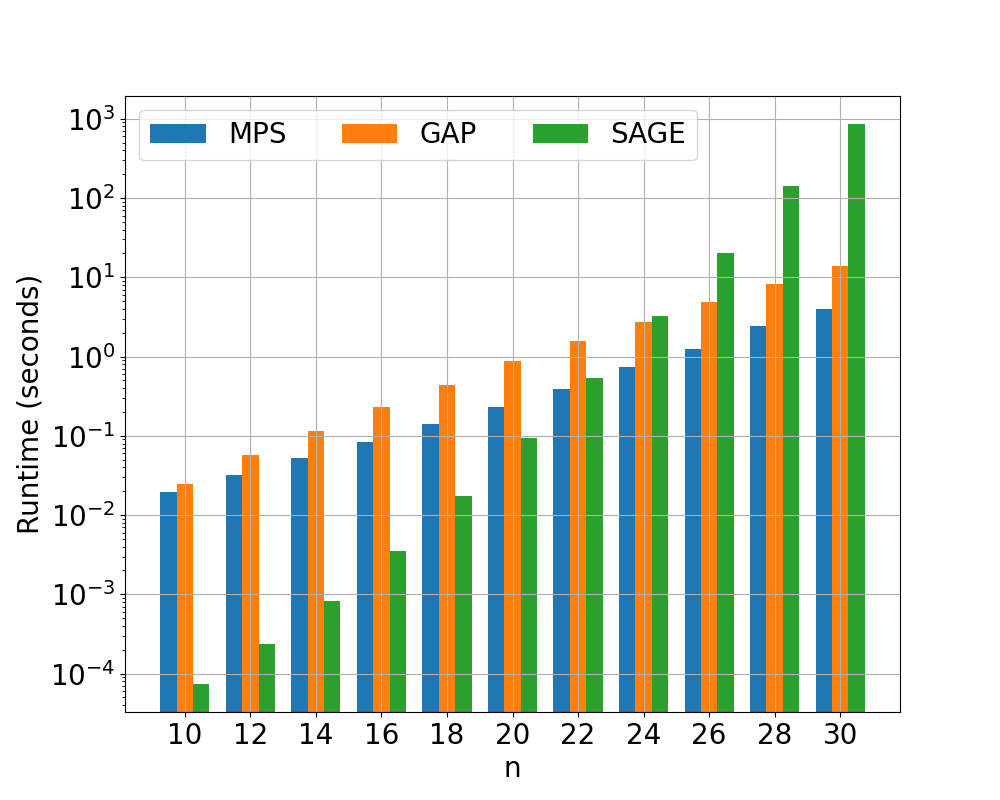}
     \caption{Runtime  for computing the characters $\chi_\lambda(g)$ for permutations $g\in S_n$
     composed of $n/2$ length-$2$ cycles and all irreps $\lambda\vdash n$. 
     The MPS compression error is set to $\epsilon=10^{-10}$ for all the data.        }
         \label{fig:runtime}
 \end{figure}

\begin{figure}[h]
  \includegraphics[height=6cm]{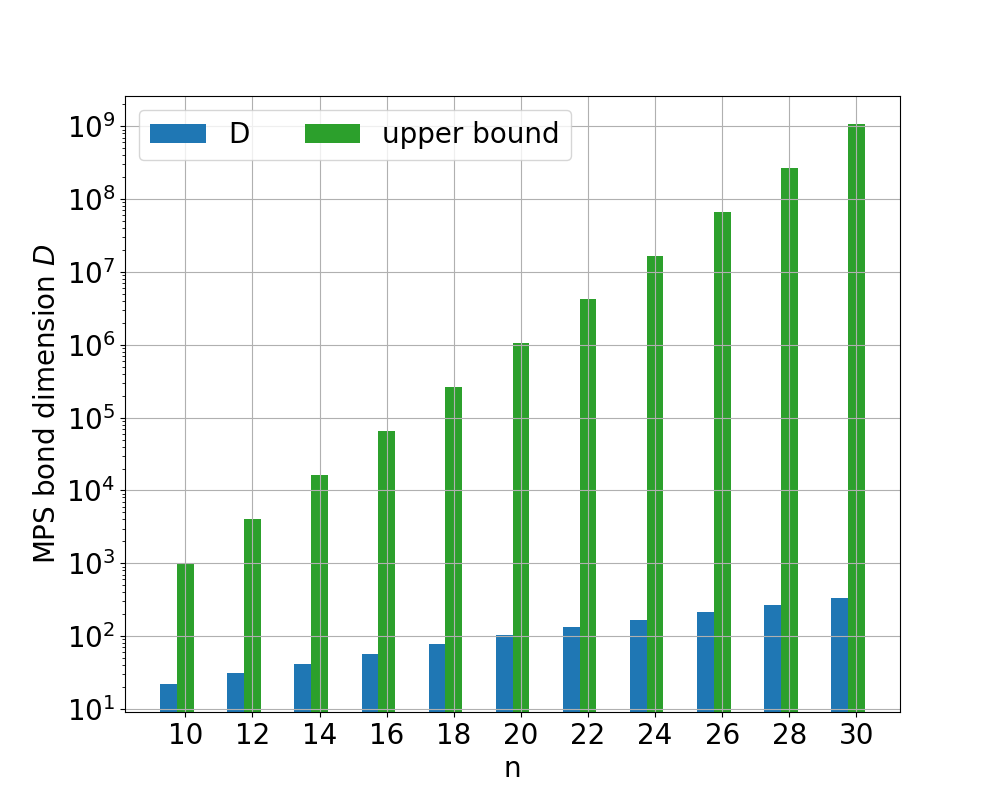}
     \caption{Blue: maximum MPS bond dimension $D$ for solving each problem instance reported on Fig.~\ref{fig:runtime}.
     Green: upper bound of Eq.~(\ref{bond_dimension}). 
       }
         \label{fig:bond_dimension}
 \end{figure}

For numerical  experiments we selected permutations with a large number of short cycles which tend to be most expensive for all considered algorithms. 
Namely, for each $n$ we consider a single permutation $g\in S_n$ composed  of $n/2$ length-$2$ cycles.
Fig.~\ref{fig:runtime} shows the measured runtime  for 
 $10\le n\le 30$. The reported runtimes include computation of the characters $\chi_\lambda(g)$ for all irreps $\lambda \vdash n$.
In the case of MPS the runtime additionally includes computing the state $|\mathsf{mps}_{g}\ra$. As one can see from the plot, the MPS algorithm 
shows a competitive performance for large values of $n$. We note that
our implementation is in Python 
while $\mathsf{SAGE}$ and $\mathsf{GAP}$ have a kernel written in C, which tends to be much faster than Python. 

Recall that our MPS algorithm computes the character $\chi_\lambda(g)$ with the zero error as long as the MPS
amplitude $\la x_{\lambda+\tau}|\mathsf{mps}_{g}\ra$ obeys Eq.~(\ref{desired_approximation}).
We found that
\[
|\chi_\lambda(g) - \la x_{\lambda+\tau}|\mathsf{mps}_{g}\ra|\le 10^{-6}
\]
for all problem instances reported at Fig.~\ref{fig:runtime} (the exact value of $\chi_\lambda(g)$ was computed using $\mathsf{SAGE}$ library).
 This is many orders of magnitude smaller than the required precision $1/2$ suggesting that 
the MPS runtime may be improved further by truncating more singular values.
Our Python implementation of the MPS algorithm can be found at~\cite{character_builder}.
We provide two versions of the algorithm that use  $\mathsf{mpnum}$ or $\mathsf{quimb}$ libraries
for MPO-MPS manipulations.

Finally, for each problem instance presented on Fig.~\ref{fig:runtime} we computed the maximum MPS bond dimension 
and compared it with the upper bound Eq.~(\ref{bond_dimension}). This comparison is reported  on Fig.~\ref{fig:bond_dimension}.

\section{Quantum algorithm}
\label{sec:quantum}

Here we describe a quantum circuit of size $poly(n)$ that prepares a state $|\psi_g\ra/\|\psi_g\|$,
see Eq.~(\ref{psi_g}).
Our circuit applies the Quantum Fourier Transform over the group $S_n$
to a certain subset state constructed  in the following lemma.
\begin{lemma}
\label{lemma:subset_state}
Let $g \in S_n$ be a permutation and $C_g\subseteq S_n$ be the conjugacy class of $g$.
There exists a quantum circuit of size $poly(n)$ that prepares a subset state
\be
|C_g\ra = \frac1{\sqrt{|C_g|}} \sum_{h \in C_g} |h\ra.
\ee
\end{lemma}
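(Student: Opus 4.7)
The plan is to parametrize $C_g$ by the left cosets of the centralizer $E_g=\{h\in S_n: hg=gh\}$. Conjugation $\pi\mapsto \pi g\pi^{-1}$ descends to a bijection $S_n/E_g \to C_g$, so $|C_g|=n!/|E_g|$. My strategy is to start from the uniform superposition over $S_n$, entangle it with the conjugate $\pi g\pi^{-1}$, and then disentangle the auxiliary register by translating each coset back into $E_g$ via a canonically chosen coset representative.

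First I would prepare $\frac{1}{\sqrt{n!}}\sum_{\pi\in S_n}|\pi\ra$, for example by applying the unitary bijection between Lehmer codes and permutations to a uniform superposition over $\{0\}\times\{0,1\}\times\cdots\times\{0,1,\ldots,n-1\}$. Next I would apply the unitary $|\pi\ra|0\ra\mapsto |\pi\ra|\pi g \pi^{-1}\ra$, with $g$ hardcoded, yielding
\be
\sqrt{\frac{|E_g|}{n!}}\sum_{h\in C_g}\left(\frac{1}{\sqrt{|E_g|}}\sum_{\pi\in \pi_h E_g}|\pi\ra\right)|h\ra,
\ee
where $\pi_h$ is any fixed element with $\pi_h g\pi_h^{-1}=h$. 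To disentangle the first register, I would compute a canonical representative $\pi_h$ from $h$ into a fresh ancilla, then apply the controlled group multiplication $|\pi\ra|\pi_h\ra\mapsto|\pi_h^{-1}\pi\ra|\pi_h\ra$. Since $\pi\in \pi_h E_g$, this leaves the first register in $\frac{1}{\sqrt{|E_g|}}\sum_{e\in E_g}|e\ra$, independent of $h$. Uncomputing $\pi_h$ from the ancilla then factors out the target register as $|C_g\ra=\frac{1}{\sqrt{|C_g|}}\sum_{h\in C_g}|h\ra$, using the identity $\sqrt{|E_g|/n!}=1/\sqrt{|C_g|}$.

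The main obstacle is the canonicalization step: for each $h\in C_g$ one must classically and reversibly produce a unique $\pi_h$ with $\pi_h g\pi_h^{-1}=h$. This can be done in $\mathrm{poly}(n)$ time by writing $g$ and $h$ in cycle notation, sorting their cycles by length with ties broken by the smallest element contained, rotating each cycle so that its minimum element comes first, and defining $\pi_h$ to map the $j$-th element of the $i$-th cycle of $g$ to the $j$-th element of the $i$-th cycle of $h$. All remaining primitives---Lehmer encoding and decoding, cycle decomposition, and permutation multiplication in $S_n$---are standard polynomial-time classical operations and therefore admit $\mathrm{poly}(n)$-size reversible quantum implementations. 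Composing these subroutines produces $|C_g\ra$ on the target register, with a fixed ancilla state on the remaining wires.
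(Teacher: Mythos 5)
Your proposal is correct and follows essentially the same route as the paper's proof: prepare the uniform superposition over $S_n$, conjugate $g$ into a second register, compute a canonical coset representative $\pi_h$ of the centralizer $E_g$ as a reversible function of $h$ (via cycles sorted by length with ties broken by minimal element), use it to disentangle the first register into a fixed superposition over $E_g$, and uncompute. The paper packages the canonical representative via a Young-tableau labeling rather than explicit cycle sorting, but the construction and the coset-counting identity $|C_g|=n!/|E_g|$ are the same.
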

We defer the proof till the end of this section.
Recall that a Quantum Fourier Transform (QFT) over the symmetric group $S_n$ is defined as a unitary operator
$\mathrm{QFT}_n$ with the input Hilbert space $\Cbb[S_n]=\mathrm{span}(|g\ra\, : \, g\in S_n)$ and
output Hilbert space $\bigoplus_{\lambda\vdash n} \Cbb^{d_\lambda}\otimes \Cbb^{d_\lambda}$ such that
\be
\label{eq4}
\mathrm{QFT}_n |g\ra=
 \sum_{\lambda \vdash n} \sqrt{\frac{d_\lambda}{n!}} \sum_{i,j=1}^{d_\lambda} \la i |\rho_\lambda(g)|j\ra \cdot |\lambda,i,j\ra
\ee
for all permutations $g \in S_n$.
%SBB1: new sentence
Here $d_\lambda$ is the dimension of $\lambda$.
The inverse QFT acts as
\be
\mathrm{QFT}_n^\dag |\lambda,i,j\ra = \sqrt{\frac{d_\lambda}{n!}} \sum_{g \in S_n} 
 \la i |\rho_\lambda(g)|j\ra^*
 |g\ra.
\ee
It is known~\cite{beals1997quantum} that $\mathrm{QFT}_n$ can be implemented by a quantum circuit of size $poly(n)$.
Applying $\mathrm{QFT}_n$ to the subset state $|C_g\ra$ 
gives a state
\begin{align}
\label{psi_8}
|\theta_g\ra & = \mathrm{QFT}_n |C_g\ra   \nonumber \\
&=\sum_{\lambda \vdash n}
\sqrt{\frac{d_\lambda}{n!}} \sum_{i,j=1}^{d_\lambda} \frac1{\sqrt{|C_g|}} \sum_{h \in C_g} \la i|\rho_\lambda(h)|j\ra  |\lambda,i,j\ra.
\end{align}
Since $\sum_{h \in C_g} h$ is a central element of the group algebra $\Cbb[S_n]$, we infer that 
\be
\label{central_element}
\sum_{h \in C_g} \rho_{\lambda}(h) = b I_{d_\lambda}
\ee
for some complex number  $b\in \Cbb$. Here $I_{d_\lambda}$ is the identity matrix of size $d_\lambda$.
Taking the trace of Eq.~(\ref{central_element}) gives
\be
\label{normalization}
b = \frac{|C_g| \chi_\lambda(g)}{d_\lambda}.
\ee
Substituting Eqs.~(\ref{central_element},\ref{normalization}) into Eq.~(\ref{psi_8})
and using the identity $|\lambda,i,j\ra =|\lambda\ra \otimes |i,j\ra$
shows that 
\[
|\theta_g\ra =  \sqrt{\frac{|C_g|}{n!}} \sum_{\lambda \vdash n} \chi_\lambda(g) |\lambda\ra \otimes |\phi_\lambda\ra,
\]
where
\[
|\phi_\lambda\ra =  \frac1{\sqrt{d_\lambda}} \sum_{i=1}^{d_\lambda} |i,i\ra.
\]
Let $\calU$ be a unitary operator that maps $|\lambda\ra \otimes |0\ra$ to $|\lambda\ra \otimes |\phi_\lambda\ra$.
%SBB1: added a comment
One can realize $\calU$ by a quantum circuit of size $poly(n)$, for example using the techniques of~\cite{shukla2024efficient},
since the dimension $d_\lambda$
can be computed in time $poly(n)$ using the well-known hook length formula~\cite{stanley1999enumerative}.

%DG: changed \theta to \theta_g below
Applying $\calU^{-1}$ to the state $|\theta_g\ra$ and tracing out the second register gives a state
\[
|\omega_g\ra =  \sqrt{\frac{|C_g|}{n!}} \sum_{\lambda \vdash n} \chi_\lambda(g) |\lambda\ra
=\frac1{\sqrt{|E_g|}} \sum_{\lambda \vdash n} \chi_\lambda(g) |\lambda\ra,
\]
where $E_g$ is the centralizer of $g$, see Eq.~(\ref{centralizer_size}).
This a normalized version of the desired state $|\psi_g\ra$ if one encodes a partition $\lambda\vdash n$
by a bit string $x_{\lambda+\tau}$ as discussed in Section~\ref{sec:spin_chain}.
By measuring the state $|\omega_g\ra$ in the standard basis one can sample an irrep $\lambda\vdash n$ from
a probability distribution 
\begin{equation}
P_g(\lambda) = \frac{(\chi_\lambda(g))^2}{|E_g|}.   \qquad (\textbf{Row sampling})
\label{eq:rowsamp}
\end{equation}
This is equivalent to sampling a row of the character table of $S_n$ with the probability proportional to the character squared.

A similar method can prepare a superposition of all permutations $g\in S_n$ with amplitudes
proportional to $\chi_\lambda(g)$ for a given irrep $\lambda\vdash n$. Indeed,
apply $\mathrm{QFT}_n^\dag$ to an initial state
\[
|\theta_\lambda\ra = \frac1{\sqrt{d_\lambda}} \sum_{i=1}^{d_\lambda} |\lambda,i,i\ra.
\]
This gives a state
\[
|\omega_\lambda\ra = \mathrm{QFT}_n^\dag|\theta_\lambda\ra = \frac1{\sqrt{n!}} \sum_{g \in S_n} \chi_\lambda(g)^* |g\ra
\]
(the complex conjugate can be omitted  since $S_n$ characters are real).
By measuring the state $|\omega_\lambda\ra$ in the standard basis one can sample a permutation $g\in S_n$ from
a probability distribution 
\begin{equation}
P_\lambda(g) = \frac{(\chi_\lambda(g))^2}{n!}.  \qquad (\textbf{Column sampling})
\label{eq:colsamp}
\end{equation}
This is equivalent to sampling a column of the character table of $S_n$ with the probability proportional to the character squared
(here we assume that columns are labelled by permutations rather than conjugacy classes).
%SBB1: new sentence
We discuss the classical hardness of these sampling problems in the next section.

\begin{proof}[\bf Proof of Lemma~\ref{lemma:subset_state}]
Let $k=|C_g|$ and 
\[
C_g = \{h_1,h_2,\ldots,h_k\}.
\]
Suppose one can construct permutations  $f_1,f_2,\ldots,f_k\in S_n$  such that 
\be
\label{mu's}
h_j = f_j g f_j^{-1}
\ee
for all $j=1,\ldots,k$ and
there exists a quantum circuit $\calQ$ of size $poly(n)$
such that 
\be
\label{calQ}
\calQ|h_j\ra \otimes |0\ra = |h_j\ra \otimes |f_j\ra
\ee
for all $j=1,\ldots,k$.
Then the desired state $|C_g\ra$ can be prepared as follows.
First prepare a state 
\[
|\psi_1\ra = \frac1{\sqrt{n!}} \sum_{e \in S_n} |e\ra \otimes |g\ra.
\]
Apply a classical reversible circuit that maps $|e\ra\otimes |g\ra$
to $|e\ra \otimes |e g e^{-1}\ra$. This gives a state
\[
|\psi_2\ra = \frac1{\sqrt{n!}} \sum_{e \in S_n} |e\ra \otimes |e g e^{-1}\ra.
\]
Let $E_g$ be the centralizer of $g$, 
\[
E_g = \{ b \in S_n \, : \,  g b = b g \}.
\]
Since left cosets of $E_g$ in $S_n$ can be identified with elements of $C_g$,
any permutation $e \in S_n$ can be uniquely decomposed as
$e = f b$ for some $f \in \{f_1,\ldots,f_k\}$ and some
$b \in E_g$. Thus
\[
|\psi_2\ra = \frac1{\sqrt{k |E_g|} } \sum_{j=1}^k \sum_{b \in E_g} |f_j b\ra \otimes |f_j g f_j^{-1}\ra.
\]
Here we noted that $n!=k|E_g|$. Tensoring in the third register
prepared in the $|0\ra$ state 
and applying the circuit Eq.~(\ref{calQ}) to the second and the third register gives a state
\[
|\psi_3\ra = \frac1{\sqrt{k |E_g|} } \sum_{j=1}^k \sum_{b \in E_g} |f_j b\ra \otimes |f_j g f_j^{-1}\ra \otimes |f_j\ra.
\]
Next apply a classical reversible circuit that maps $|f_j b\ra \otimes |f_j\ra$
to $|b\ra \otimes |f_j\ra$ to the first and the third registers of $|\psi_3\ra$. This gives a state
\[
|\psi_4\ra = \frac1{\sqrt{k |E_g|} } \sum_{j=1}^k \sum_{b \in E_g} |b\ra \otimes |f_j g f_j^{-1}\ra \otimes |f_j\ra.
\]
Now the first register is left in a pure state $(1/\sqrt{|E_g|})\sum_{b\in E_g} |b\ra$.
Tracing out the first register gives a state
\[
|\psi_5\ra =  \frac1{\sqrt{k }} \sum_{j=1}^k  |f_j g f_j^{-1}\ra \otimes |f_j\ra.
\]
Applying the inverse of the circuit Eq.~(\ref{calQ}) gives a state
\[
|\psi_6\ra =  \frac1{\sqrt{k }} \sum_{j=1}^k  |f_j g f_j^{-1}\ra \otimes |0\ra.
\]
Finally, tracing out the second register we obtain a state
\[
|\psi_7\ra =   \frac1{\sqrt{k }} \sum_{j=1}^k  |f_j g f_j^{-1}\ra =
\frac1{\sqrt{|C_g|}} \sum_{h \in C_g} |h\ra = |C_g\ra.
\]
Here the second equality makes use of Eq.~(\ref{mu's}) and 
the equality $C_g = \{h_1,h_2,\ldots,h_k\}$.

It remains to construct permutations $f_1,\ldots,f_k$ 
and a quantum circuit $\calQ$ satisfying Eqs.~(\ref{mu's},\ref{calQ}).
Suppose $g$ consists of $a_\ell$ cycles of length $\ell$,
where $a_\ell$ is an integer in the range $[0,n]$ and
\begin{equation}
n=\sum_{\ell=1}^n \ell a_\ell.
\label{eq:nal}
\end{equation}
Clearly, the conjugacy class $C_g$ depends only on $a_1,\ldots,a_n$.
Let $T$ be a Young tableau with $n$ boxes and  $a_\ell$ rows of length $\ell$ 
for $\ell=1,\ldots,n$. We fill $T$ with integers $1,2,\ldots,n$ in the row major order
starting from the leftmost box in the first (top) row.
Now integers $1,2,\ldots,n$ can be identified with boxes of $T$.
We can assume wlog that $g$ cyclically shifts boxes in each row of $T$
such that $i$-th box in a row $j$ goes to $(i+1)$-th box in the same row
and the last box in a row $j$ goes to the first box in the same row.

Consider a permutation $h \in C_g$. Clearly, $h$ has the same cyclic type as $g$
although boxes spanning each cycle of $h$ may be scattered over many rows of $T$.
This determines a partition of $T$ into disjoint union of subsets
\[
T=\bigcup_{\ell=1}^n \bigcup_{j=1}^{a_\ell} T_{\ell,j}
\]
such that $|T_{\ell,j}|=\ell$ and
boxes in each subset $T_{\ell,j}$ span a length-$\ell$ cycle in $h$.
Let $b_{\ell,j} \in T_{\ell,j}$ be a   box contained in $T_{\ell,j}$ with the smallest
integer label
(recall that each box of $T$ is labeled by an integer between $1$ and $n$).
Order the subsets $T_{\ell,j}$ with a fixed $\ell$ such that 
\[
b_{\ell,1}<b_{\ell,2}<\ldots<b_{\ell,a_\ell}.
\]
Let $B_{\ell,j}$ be the first (leftmost) box of $T$ in the $j$-th length-$\ell$ row
(rows are counted starting from the top). 
Let $f\in S_n$ be a permutation such that 
\be
\label{mu_def}
f(g^p(B_{\ell,j})) = h^p(b_{\ell,j}) \quad \mbox{for all} \quad p=0,1,\ldots,\ell-1
\ee
and for all $\ell,j$.
Here $g^p \in S_n$ denotes the $p$-th power of $g$.
Since any box of $T$ can be obtained from some box $B_{\ell,j}$ by applying powers of $g$,
the permutation $f$ is uniquely determined by $h$.
Also it is clear that  the map $h \to f$ is computable in time $poly(n)$. Next we claim that 
 $h = f g f^{-1}$.
Since $h$ cyclically shifts boxes in each subset $T_{\ell,j}$, 
it suffices to check that 
\be
h^p(b_{\ell,j}) = (f g f^{-1})^p (b_{\ell,j})
\ee
for all $\ell,j,p$. The above is equivalent to 
\be
h^p(b_{\ell,j})  = f( g^p f^{-1}(b_{\ell,j})) = f(g^p (B_{\ell,j})),
\ee
which follows from Eq.~(\ref{mu_def}).
A classical circuit of size $poly(n)$ computing the map $h \to f$ can be converted to a
poly-size
quantum circuit $\calQ_1$ 
computing a map
\[
\calQ_1 |h\ra \otimes |0\ra \otimes |0\ra = |h\ra \otimes |\mathrm{garbage}(h)\ra \otimes |f\ra.
\]
Here $\mathrm{garbage}(h)$ represents auxiliary bits used to make all gates reversible.
Making a copy of the third register in the standard basis and applying the inverse of $\calQ_1$ gives
a poly-size circuit $\calQ$ computing a map
\[
\calQ |h\ra \otimes |0\ra = |h\ra \otimes |f\ra,
\]
where $h=fgf^{-1}$.
This is the circuit defined in Eq.~(\ref{calQ}).
%SBB1: I don't think we need the rest of this section
%Finally, apply a quantum circuit implementing a map $|h\ra \otimes |f\ra \to |f^{-1} h f\ra \otimes |f\ra$. By construction,  $h = f g f^{-1}$. We obtain a poly-size quantum circuit $\calQ_3$
%computing a map
%\[
%\calQ_3 |h\ra \otimes |0\ra = |g\ra \otimes |f\ra.
%\]
%The first output register %is in a fixed state $|g\ra$ which does not depend on $h$.
%Ignoring the first output register gives the desired circuit $\calQ$.
\end{proof}

\section{Classical complexity of sampling}
\label{sec:granular}
In Section \ref{sec:quantum} we showed that a quantum computer can efficiently sample from probability distributions Eqs. \eqref{eq:rowsamp} and \eqref{eq:colsamp} associated with a given column or row of the character table of the symmetric group $S_n$. Are these sampling problems hard for classical computers?

\subsection{Granularity and quantum advantage}
Let us first consider a general setting in which a sampling problem is defined by a probability distribution $P$ over some finite set  $\Omega$ called the sample space and an error tolerance $\epsilon$. We  consider infinite families of problem instances labeled by an integer $n$ called the problem size. It is required that each point $x\in \Omega$ can be specified by $poly(n)$ bits. Likewise, the probability distribution $P$ should admit a succinct classical description of size  $poly(n)$. We also assume that the error tolerance obeys $\epsilon \ge poly(1/n)$. A classical or quantum algorithm solves the sampling problem if it outputs a sample $x\in \Omega$ from some probability distribution $P'$
which is $\epsilon$-close to $P$ in the total variation distance, such that  $\|P-P'\|_1\equiv \sum_{x\in \Omega} |P(x)-P'(x)|\le \epsilon$. Such an algorithm is  efficient if its runtime is at most $poly(n)$.

We are specifically interested in sampling problems that can be solved efficiently by a quantum computer. To provide evidence for classical hardness of such a sampling problem one can follow the complexity-theoretic argument pioneered by Refs. \cite{aaronson2011computational, bremner2011classical}. Roughly speaking, this argument gives a reduction from \textit{average case strong simulation}---computing output probabilities of a quantum circuit in the average case over a suitable ensemble---to the sampling problem of interest. Here the reduction uses polynomial time classical resources as well as access to an NP oracle (i.e, it is a $BPP^{NP}$ algorithm). If the average case strong simulation problem is so hard that it cannot be solved using resources within the polynomial hierarchy, then this implies that no efficient classical algorithm can solve the sampling problem of interest.

Here we describe a variant of this argument for sampling problems with a certain granularity structure, described below.

\begin{dfn}[\bf Granularity]
A probability distribution $P$ over a finite set $\Omega$ is 
called granular if there exists a real number $ \gamma \ge 1$ such that
$P(x)$ is 
an integer multiple of $\gamma /|\Omega|$ for all points $x\in \Omega$.
We refer to the number $\gamma $ as the granularity of $P$.
\end{dfn}
For example, the distribution Eq.~(\ref{eq:colsamp}) defined by columns of the character table
has granularity $\gamma=1$ since the sample space has size $|\Omega|=|S_n|=n!$
and characters of the symmetric group are integer-valued. The distribution Eq.~(\ref{eq:rowsamp}) defined by rows of the character
table may or may not be granular depending on the choice of the permutation $g \in S_n$, as detailed below.
If the input of a sampling problem is a granular distribution $P$, the granularity $\gamma $
has to be included in the classical description of $P$.

\begin{lemma}
\label{lemma:david}
Suppose $P_0$ and $P_1$ are probability distributions over a finite set $\Omega$
such that $P_0$ is granular, $P_1$ can be sampled by a BPP algorithm,
and $\|P_0-P_1\|_1\le \epsilon$ for a given $\epsilon$
 which  is at least inverse polynomial in $n$. Let $\gamma \ge 1$ be the granularity of $P_0$.
There exists a $BPP{}^{NP}$ algorithm that takes as input a point
$x\in \Omega$ and outputs an integer $k(x)$ such that
\be
P_0(x) = \frac{ \gamma k(x)}{|\Omega|}
\ee
for at least $\max{\{0,1-20\epsilon\}}$ fraction of points $x\in \Omega$. 
\end{lemma}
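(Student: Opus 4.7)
The plan is to invoke Stockmeyer's approximate counting. Since $P_1$ admits a $\mathrm{BPP}$ sampler $\calA$ by assumption, for every fixed $x$ the probability $P_1(x) = |\{r\in\{0,1\}^m : \calA(r)=x\}|/2^m$ is a $\#\mathrm{P}$ quantity (with $m=\mathrm{poly}(n)$). Stockmeyer's theorem therefore gives, for any inverse-polynomial $\delta$, a $\mathrm{BPP}^{\mathrm{NP}}$ algorithm that outputs $\tilde P(x)$ with $(1-\delta)P_1(x)\le \tilde P(x)\le (1+\delta)P_1(x)$. My algorithm will output the integer $k(x)$ nearest to $\tilde P(x)\,|\Omega|/\gamma$, and I would show that $k(x)$ equals the true integer $k_x := P_0(x)|\Omega|/\gamma$ for all but a $20\epsilon$ fraction of $x$.

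First I would define two bad sets,
\[
A = \{x : |P_1(x)-P_0(x)|\ge \gamma/(10|\Omega|)\},
\qquad
B = \{x : k_x \ge K\},
\]
with $K := \lceil 1/(10\epsilon\gamma)\rceil$. Since each element of $A$ contributes at least $\gamma/(10|\Omega|)$ to $\|P_1-P_0\|_1\le\epsilon$, one gets $|A|\le 10\epsilon|\Omega|/\gamma\le 10\epsilon|\Omega|$. Since $\sum_x k_x = |\Omega|/\gamma$ and each element of $B$ contributes at least $K$ to this sum, $|B|\le |\Omega|/(K\gamma)\le 10\epsilon|\Omega|$. Hence $|A\cup B|\le 20\epsilon|\Omega|$, exactly matching the permitted failure fraction.

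For $x\notin A\cup B$, the triangle inequality together with $P_1(x)\le P_0(x)+\gamma/(10|\Omega|)\le (K+1/10)\gamma/|\Omega|$ yields
\[
|\tilde P(x)-P_0(x)| \le \delta P_1(x) + |P_1(x)-P_0(x)| \le \delta (K+1/10)\gamma/|\Omega| + \gamma/(10|\Omega|).
\]
Choosing $\delta := 1/(5K)$, the right-hand side is at most $3\gamma/(10|\Omega|) < \gamma/(2|\Omega|)$, so rounding $\tilde P(x)\,|\Omega|/\gamma$ recovers $k_x$ exactly. Under the standing assumptions $\epsilon\ge 1/\mathrm{poly}(n)$ and $\gamma\ge 1$, the value $K=O(1/(\epsilon\gamma))$ is polynomial, hence $\delta$ is inverse-polynomial and the Stockmeyer invocation is a polynomial-time $\mathrm{BPP}^{\mathrm{NP}}$ procedure.

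The main obstacle is that Stockmeyer affords only multiplicative precision of inverse-polynomial size, whereas the target grid spacing $\gamma/|\Omega|$ is typically exponentially small. The key reconciling observation is that $\sum_x k_x = |\Omega|/\gamma$ forces all but an $O(\epsilon)$ fraction of points to have $k_x$ bounded by $1/(\epsilon\gamma)$, which is polynomial under the standing assumptions. On this polynomially bounded regime, the multiplicative error $\delta P_1(x)$ scales with $k_x$ and can be pushed below half a grid spacing using only inverse-polynomial precision; this is essentially the only scaling that simultaneously controls the failure fraction, the rounding error, and the Stockmeyer runtime.
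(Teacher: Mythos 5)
Your proposal is correct and follows essentially the same route as the paper's proof: Stockmeyer approximate counting for the $\mathrm{BPP}$-samplable $P_1$, Markov-type bounds on two bad sets (points where $P_0$ is large and points where $P_0$ and $P_1$ deviate), a union bound yielding the $20\epsilon$ failure fraction, and rounding to the nearest multiple of $\gamma/|\Omega|$. The only quibble is a harmless arithmetic slip: with $\delta=1/(5K)$ your error bound evaluates to $0.32\,\gamma/|\Omega|$ rather than $3\gamma/(10|\Omega|)$, but this still sits safely below the half-spacing $\gamma/(2|\Omega|)$, so the rounding step goes through.
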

\begin{proof}
Let $N=|\Omega|$.
Define sets 
\[
S = \left\{ x\in \Omega\ \, : \, P_0(x)\le \frac{1}{10 N\epsilon}\right\}
\]
and
\[
T = \left\{  x\in \Omega \, : \, |P_0(x)-P_1(x)|\le \frac{1}{10 N}\right\}.
\]
Markov's inequality gives  $|S| \ge (1-10\epsilon)N$
and  $|T|\ge (1-10\epsilon)N$. By the union bound, 
\be
\label{ScapT}
|S\cap T| \ge (1-20\epsilon) N.
\ee
The assumption that $P_1$ can be sampled by a $BPP$ algorithm and 
Stockmeyer's approximate counting theorem \cite{stockmeyer1983complexity} imply that there exists a $BPP{}^{NP}$ algorithm that takes
as input a point $x\in \Omega$ and outputs a real number $P_2(x)$ such that 
\[
|P_2(x)-P_1(x)|\le \epsilon P_1(x).
\]
For any $x\in S\cap T$ one has $P_1(x)\le P_0(x) + 1/(10N)  \le 1/(10N\epsilon) + 1/(10N) \le 1/(5N\epsilon)$ which imples $|P_2(x)-P_1(x)|\le 1/(5N)$.
 Thus for any $x\in S\cap T$ one has
\begin{align}
\label{P_2}
|P_2(x) - P_0(x)| & \le |P_2(x)-P_1(x)| + |P_0(x)-P_1(x)| \nonumber \\
& \le
\frac1{5N} + \frac1{10N} \le \frac1{3N}.
\end{align}
By assumption, $P_0(x)$ is an integer multiple of $\gamma /N$ for a given $\gamma \ge 1$.
Let $P_3(x)$ be an integer multiple of $\gamma/N$ closest to $P_2(x)$. 
Using
Eq.~(\ref{P_2}) one gets $P_3(x)=P_0(x)$
whenever $x\in S\cap T$.
The above shows that $P_3(x)$ can be computed by  a  $BPP{}^{NP}$ algorithm
and Eq.~(\ref{ScapT}) implies that $P_3(x)=P_0(x)$ for at least $1-20\epsilon$ fraction of points $x\in \Omega$.
Finally, set  $k(x)=(N/\gamma )P_3(x)$.
\end{proof}
A central challenge in quantum complexity theory is to prove the quantum advantage conjecture \cite{aaronson2011computational} in the following form.
\begin{conj}
Assume that the Polynomial Hierarchy (PH) does not collapse.
Then there exists a sampling problem  which can be solved efficiently on a quantum computer and which is provably hard for classical randomized algorithms. 
\end{conj}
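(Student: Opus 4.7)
The plan is to instantiate the granularity framework of Lemma~\ref{lemma:david} with the concrete quantum sampling problems built in Section~\ref{sec:quantum}, following the Aaronson--Arkhipov template but with granularity playing the role usually played by anti-concentration. Specifically, I would propose column sampling from $P_\lambda(g) = \chi_\lambda(g)^2/n!$ as the candidate classically-hard distribution: by Section~\ref{sec:quantum} it is prepared by a $\mathrm{poly}(n)$-size quantum circuit (apply $\mathrm{QFT}_n^\dag$ to $|\theta_\lambda\ra$ and measure), and since $\chi_\lambda(g) \in \Zbb$ we have $|S_n| \cdot P_\lambda(g) = \chi_\lambda(g)^2 \in \Zbb_{\ge 0}$, so $P_\lambda$ is granular with $\gamma = 1$.

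Given any hypothetical $BPP$ sampler producing $P_1$ with $\|P_\lambda - P_1\|_1 \le \epsilon$, Lemma~\ref{lemma:david} immediately yields a $BPP^{NP}$ algorithm that, for a $1 - 20\epsilon$ fraction of permutations $g \in S_n$, outputs the integer $\chi_\lambda(g)^2$ exactly. Combining this with an average-case \#P-hardness hypothesis---namely, that computing $\chi_\lambda(g)^2$ on a $1-O(\epsilon)$ fraction of inputs drawn from some natural ensemble is \#P-hard---would place a \#P-hard problem inside $BPP^{NP} \subseteq \Sigma_3^P$, collapsing the polynomial hierarchy by Toda's theorem. The contrapositive delivers the conjecture.

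I would carry out the steps in this order: first, verify the quantum sampler of Section~\ref{sec:quantum} runs in $\mathrm{poly}(n)$ and produces $P_\lambda$ to within inverse-polynomial total variation; second, invoke granularity to apply Lemma~\ref{lemma:david} to any classical BPP dequantizer; third, invoke the average-case \#P-hardness hypothesis for $\chi_\lambda(g)^2$; fourth, chain with Stockmeyer's approximate counting and Toda's theorem. Steps one, two, and four are essentially routine given the preceding sections and standard complexity theory. One could alternatively attempt row sampling from $P_g$ (approximately granular for suitable $g$), which is the avenue the paper hints at via its conjecture on permutations with $O(\sqrt{n}/\log n)$ cycles; Lemma~\ref{lemma:david} would need its obvious approximately-granular extension.

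The hard part---and the reason the statement is stated as a conjecture rather than a theorem---is the third step. A worst-to-average-case reduction for symmetric group characters analogous to Lipton's polynomial interpolation for the permanent is not known: unlike the permanent, $\chi_\lambda(g)$ does not present itself as a low-degree polynomial over a finite field in a natural way, and randomizing $g$ or $\lambda$ while preserving the worst-case \#P-hardness of Hepler and Ikenmeyer is delicate because of the conjugacy-class structure of $S_n$. In effect, the plan reduces the quantum advantage conjecture to a concrete average-case hardness question about the character table of $S_n$, which is precisely the open problem the paper leaves for future work.
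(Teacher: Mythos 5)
Your overall framework is the same as the paper's: the paper does not prove the conjecture either, but shows (via Lemma~\ref{lemma:david} together with Stockmeyer counting) that it would follow from any sampling problem that is granular, quantumly easy, and average-case hard for \emph{exact} probability computation, and it leaves the average-case hardness step open exactly as you do. So your reduction is sound and matches the paper's conditional argument. The one substantive divergence is your choice of lead candidate. You propose column sampling $P_\lambda(g)=\chi_\lambda(g)^2/n!$, but the paper gives concrete evidence that this instantiation fails at the third step: by Lemma~\ref{lemma:MNrule}, a uniformly random $g\in S_n$ has $O(\log n)$ cycles with probability $1-1/n^{100}$, so the MPS algorithm computes $\chi_\lambda(g)$ exactly in time $n^{O(\log n)}$ on that fraction of inputs. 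Since column sampling randomizes over $g$, an average-case hardness hypothesis for $\chi_\lambda(g)^2$ over uniform $g$ is implausible, not merely unproven. The paper's actual candidate is row sampling $P_g(\lambda)$ for a fixed permutation $g$ with many cycles --- enough that no subexponential algorithm is known, but at most $A\sqrt{n}/\log n$ so that granularity holds \emph{exactly} by Lemma~\ref{lemma:row_granularity}; there the randomness is over partitions $\lambda\vdash n$ rather than over $g$, and no ``approximately granular'' extension of Lemma~\ref{lemma:david} is needed in that regime, contrary to your aside. With that substitution your argument coincides with the paper's, and, like the paper's, it remains conditional: neither establishes the conjecture unconditionally.
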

Lemma~\ref{lemma:david} implies that the conjecture is true
if there exists a sampling problem satisfying the following conditions.\\

\noindent
{\bf (1) Granularity:} the sampling problem is based on a family of granular probability distributions $P$,\\

\noindent
{\bf (2) Quantum algorithm:} the sampling problem can be solved efficiently on a quantum computer,\\

\noindent
{\bf (3) Average-case hardness:}
The problem of computing $P(x)$ exactly on a 
point $x\in \Omega$  picked uniformly at random with a sufficiently high success probability 
is not contained in the Polynomial Hierarchy. Here the success probability must be at least $1-poly(1/n)$ for problem instances of size $n$. 

We see that for granular distributions the hardness of classical sampling follows from the average-case hardness property (3) and the assumption that the polynomial hierarchy does not collapse. This differs from the usual argument in that we do not need to establish the so-called anticoncentration property for the family of distributions $P$; moreover, the average case hardness property (3) concerns \textit{exact} computation of output probabilities rather than approximate computation.

%SBB: added a comment
Note that a dummy algorithm for computing $P(x)$ that outputs zero for all $x\in \Omega$
 has success probability at least $1-1/\gamma$. 
 Indeed, the dummy algorithm succeeds with the probability $|S|/|\Omega|$, where
 $S=\{ x \in  \Omega\, : \, P(x)=0\}$.
Granularity gives $P(x)\ge \gamma/|\Omega|$ for
$x\notin S$.
Thus $1=\sum_{x\in \Omega} P(x) \ge \gamma (|\Omega|-|S|)/|\Omega|$, that is,
$|S|/|\Omega| \ge 1-1/\gamma$, as claimed.  This shows that the average-case hardness property (3) can only be achieved if 
the granuarity $\gamma$ is most polynomial in $n$.

\subsection{Row and column sampling}
We have already seen that the column sampling problem obeys properties (1) and (2).
Furthermore, computing the probability  $P_\lambda (g)=|\chi_\lambda(g)|^2/n!$
is $\#P$-hard in the worst-case~\cite{Ikenmeyer24}. 
%SBB3: new sentence
Moreover, this problem is not contained in $\#P$ unless the Polynomial Hierarchy collapses to the second level~\cite{Ikenmeyer24}.
However it seems unlikely that average-case hardness property (3) holds for this problem. In particular,  the following lemma gives a classical quasi-polynomial time  algorithm for computing $P_\lambda (g)$ that succeeds with high probability if  $g \in S_n$ is a random permutation.

\begin{lemma}
\label{lemma:MNrule}
There exists a classical algorithm that takes as input an irrep $\lambda$ of the symmetric group $S_n$,
a permutation $g\in S_n$, and outputs an integer $k(g)$ such that $k(g)=\chi_\lambda(g)$
for at least $1-1/n^{100}$ fraction of permutations $g\in S_n$. The algorithm has runtime $n^{O(\log{n})}$.
\end{lemma}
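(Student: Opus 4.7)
The plan is to combine the classical Murnaghan--Nakayama (MN) rule for characters of $S_n$ with a standard concentration bound on the number of cycles of a uniformly random permutation. Recall that the MN rule expresses $\chi_\lambda(g)$ as a signed sum over sequences of border-strip removals: after peeling off one cycle of length $\nu_j$ at a time, one obtains
\[
\chi_\lambda(g) = \sum_{\mu} (-1)^{ht(\lambda/\mu)}\, \chi_\mu(g'),
\]
where $\mu$ runs over partitions such that $\lambda/\mu$ is a rim hook of size $\nu_j$ and $g'$ is $g$ with this cycle removed. At each recursive step there are at most $O(n)$ choices of rim hook, so a straightforward implementation (with memoization on intermediate partitions encountered) runs in time $n^{O(c(g))}\cdot \mathrm{poly}(n)$, where $c(g)$ is the number of cycles in $g$.

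The second ingredient is the classical fact that $c(g)$ is sharply concentrated around $\log n$. Using the well-known representation $c(g) \stackrel{d}{=} X_1+X_2+\cdots+X_n$ where the $X_i$ are independent Bernoulli variables with $\Pr[X_i=1]=1/i$, a Chernoff bound yields
\[
\Pr\bigl[c(g) > C\log n\bigr] \le n^{-\Omega(C\log C)}.
\]
Choosing $C$ to be a sufficiently large absolute constant makes this probability at most $n^{-100}$.

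The algorithm is then: given $(\lambda,g)$, compute $c(g)$; if $c(g) > C\log n$ output $0$, otherwise run the MN rule to obtain $\chi_\lambda(g)$ exactly. On the $(1 - n^{-100})$-fraction of permutations $g$ for which $c(g) \le C\log n$, the output equals $\chi_\lambda(g)$, and the runtime is $n^{O(\log n)}$ as required. The correctness on the remaining permutations is not claimed by the lemma, so the ``dummy'' output there is harmless.

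The main obstacle is the bookkeeping required to turn the heuristic ``$n^{O(c(g))}$'' bound on the MN rule into a rigorous one: one must check that the distinct partitions produced by the recursion, at any fixed depth, form a set of size at most $n^{O(c(g))}$, and that the per-node work is polynomial. A secondary technical point is fixing the constants in the Chernoff estimate so that the tail probability is genuinely at most $n^{-100}$ for all sufficiently large $n$; both are standard but require some care.
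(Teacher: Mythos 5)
Your proof is correct and follows the same skeleton as the paper's: compute $c(g)$, output a dummy value when $c(g)$ exceeds $O(\log n)$, and otherwise compute $\chi_\lambda(g)$ exactly in time $n^{O(c(g))}$; correctness on a $1-n^{-100}$ fraction of permutations then follows from concentration of the cycle count around $H_n\approx \log n$. The one substantive difference is the exact-computation subroutine. The paper reuses its own MPS algorithm with zero truncation error, whose bond dimension is at most $\prod_j(\nu_j+2)\le (n+2)^{c(g)}$, giving runtime $O(n^2D^3)=n^{O(c(g))}$; you instead use the Murnaghan--Nakayama recursion. Both give the needed bound, and your choice is arguably the more standard one (the paper itself notes in its related-work discussion that most classical character algorithms are MN variants). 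For the tail bound, the paper cites a theorem of Ford stating $\Pr[c(g)\ge 1+\ell H_n]\le 2e^{1-(\ell\log\ell-\ell+1)H_n}$, which is essentially the same estimate you derive from the Feller coupling plus a multiplicative Chernoff bound, so that step is fine. Finally, the ``main obstacle'' you flag is not really one: the naive MN recursion tree has depth $c(g)$ and branching factor at most $n$ (a removable rim hook of a fixed size is determined, say, by its topmost row), hence at most $n^{c(g)}$ nodes with $poly(n)$ work per node, so no memoization argument is needed to reach $n^{O(c(g))}$.
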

\begin{proof}
The proof combines the MPS algorithm of Section \ref{sec:classical} for computing characters of $S_n$ 
with tail bounds for the  number of cycles in a random permutation. We shall use the algorithm from Section \ref{sec:classical} with zero truncation error ($\epsilon=0$). With input $\lambda\vdash n$ and $g\in S_n$, this algorithm computes an MPS $|\psi_g\rangle$ of bond dimension Eq.~\eqref{bond_dimension} which can be upped bounded as $D\leq (n+2)^{c(g)}$ where $c(g)$ is the number of cycles in $g$. Recall that computing this MPS and extracting the amplitude $\chi_{\lambda}(g)$ uses a runtime upper bounded as
\[
O(n^2D^3)=n^{O(c(g))}.
\]
This runtime is exponential in $n$ in the worst case since the number of cycles $c(g)$ can be linear in $n$.
 However we are only interested in the average-case, that is, random permutations. And random permutations are known to have only a few cycles whp.
The following fact is a special case of Theorem 1.7 in~\cite{ford2021cycle}.
\begin{fact}
\label{fact:cycles}
Suppose $g \in S_n$ is picked uniformly at random. Let $c(g)$ be the number of cycles
in $g$. Then for all $\ell \ge 1$ one has 
\be
\mathrm{Pr}[c(g)\ge 1+\ell H_n] \le 2e^{1-(\ell \log{(\ell)} -\ell + 1)H_n},
\ee
where
$H_n = \sum_{j=1}^n \frac1{j} =\log{(n)} +O(1)$
is the $n$-th harmonic number and $e=2.71828\ldots$.
\end{fact}
\begin{corol}
\label{corol:eigen}
There exists a constant $B > 0$ such that the fraction of permutations 
$g\in S_n$ with $c(g) > B \log{(n)}$ cycles is at most $1/n^{100}$
for all large enough n. 
\end{corol}
If $c(g)\le B\log{n}$, use the MPS algorithm to compute $\chi_\lambda(g)$, which takes time $n^{O(\log{n})}$.
Otherwise, if $c(g)>B\log{n}$, output some fixed integer (say, zero). This gives the desired algorithm.
\end{proof}

Next let us consider the row sampling problem.
Lemma~\ref{lemma:MNrule} suggests that classically hard instances of the problem
can be constructed by choosing the input permutation $g\in S_n$ with a large number of cycles,
for example,  $c(g)\sim \sqrt{n}$. In this case best known classical algorithms for computing the
characters $\chi_\lambda(g)$ appear to require an exponential time.
However, we need to check whether permutations $g$ with a large number of cycles
give rise to instances of the row sampling problem with a granular distribution.
This is addressed in the following lemma.
\begin{lemma}[\bf Granularity for row sampling]
\label{lemma:row_granularity}
There exist a universal constant $A>0$ such that 
the probability distribution $P_g(\lambda)$ over partitions $\lambda \vdash n$ from Eq.~\eqref{eq:rowsamp} is granular whenever
$g\in S_n$ is a permutation with at most $A\sqrt{n}/\log{n}$ cycles.
\end{lemma}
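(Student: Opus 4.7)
The plan is to exploit the fact that characters of $S_n$ are integer-valued, so that $\chi_\lambda(g)^2$ is a non-negative integer. The sample space here is $\Omega=\{\lambda\vdash n\}$, whose size is $|\Omega|=p(n)$, the partition number. Writing $P_g(\lambda)=\chi_\lambda(g)^2/|E_g|$ in the form $\gamma\,k(\lambda)/|\Omega|$ with $k(\lambda)=\chi_\lambda(g)^2\in\Zbb_{\ge 0}$ forces the natural choice $\gamma=p(n)/|E_g|$. The granularity requirement $\gamma\ge 1$ therefore reduces to the purely enumerative inequality $|E_g|\le p(n)$.

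To establish this inequality I would take logarithms in the explicit formula Eq.~\eqref{centralizer_size} and apply elementary bounds: $\log(a_\ell!)\le a_\ell \log a_\ell\le a_\ell \log c(g)$ (using $a_\ell\le c(g)$ and $a_\ell!\le a_\ell^{a_\ell}$) together with $\log\ell\le\log n$. Summing over $\ell$ and invoking $\sum_\ell a_\ell=c(g)$ yields $\log|E_g|\le c(g)\log c(g)+c(g)\log n\le 2c(g)\log n$. Under the hypothesis $c(g)\le A\sqrt{n}/\log n$ this gives $\log|E_g|\le 2A\sqrt{n}$.

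The final ingredient is the Hardy--Ramanujan asymptotic $\log p(n)=\pi\sqrt{2n/3}\,(1+o(1))$. Since $\pi\sqrt{2/3}\approx 2.565>2$, choosing any universal constant $A<\pi/\sqrt{6}$, for instance $A=1$, ensures $\log|E_g|<\log p(n)$ for all sufficiently large $n$, hence $|E_g|\le p(n)$ and $\gamma\ge 1$. For the finitely many small values of $n$ where this asymptotic has not yet taken over, the hypothesis $c(g)\le A\sqrt{n}/\log n$ is either vacuous or permits only a handful of cycle types, all of which can be absorbed by shrinking $A$ slightly.

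The main obstacle I anticipate is simply keeping track of constants: the crude estimate $\log|E_g|\le 2c(g)\log n$ wastes roughly a factor of two compared with a tight Stirling-based bound, but the comfortable gap $\pi\sqrt{2/3}>2$ leaves enough slack that no sharper analysis of $\sum_\ell a_\ell\log\ell$ using the conservation law $\sum_\ell \ell a_\ell=n$ is required. Should one want the largest possible constant $A$, one could instead maximize $\log|E_g|$ over cycle types subject to $\sum_\ell \ell a_\ell=n$ and $\sum_\ell a_\ell=c(g)$ using Lagrange multipliers, but this is not needed for the qualitative statement of the lemma.
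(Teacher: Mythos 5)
Your proposal is correct and follows essentially the same route as the paper: reduce granularity to $|E_g|\le p(n)$, bound $\log|E_g|$ via $a_\ell!\le a_\ell^{a_\ell}$ and the cycle-type constraints, and invoke the Hardy--Ramanujan lower bound $\log p(n)\sim\pi\sqrt{2n/3}$. The only difference is that the paper bounds $\sum_\ell a_\ell\log(\ell a_\ell)\le c(g)\log n$ using $\ell a_\ell\le n$, whereas you lose a harmless factor of $2$ by bounding $\ell$ and $a_\ell$ separately.
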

\begin{proof}
Let $\Omega_n$ be the set of partitions $\lambda \vdash n$ (our sample space).
We can write
\be
P_g(\lambda) = \gamma \cdot  \frac{|\chi_\lambda(g)|^2}{|\Omega_n|} \qquad \gamma=\frac{|\Omega_n|}{|E_g|}.
\ee
where $E_g$ is the centralizer of $g$. Since $\chi_\lambda(g)$ is integer for all $\lambda\in \Omega_n$, 
the distribution $P_g(\lambda) $ is granular whenever $\gamma \ge 1$.

Suppose $g$ consists of $a_\ell$ cycles of length $\ell$. Then $|E_g|$ is given by Eq.~(\ref{centralizer_size}). Also, it is known \cite{hardyramanujan} that  $|\Omega_n|\ge e^{A \sqrt{n}}$ for some universal constant $A>0$ (one can choose $A=\pi \sqrt{2/3} - \epsilon$ for any constant $\epsilon>0$)
and all sufficiently large $n$. Thus 
\be
\gamma > \frac{e^{A\sqrt{n}}}{\prod_{\ell=1}^n \ell^{a_\ell} a_\ell !}.
\ee
Since $a_\ell ! \le (a_\ell)^{a_\ell}$, one gets
\be
\gamma \ge \exp{\left[ A\sqrt{n} - \sum_{\ell=1}^n a_\ell \log{(\ell a_\ell)} \right]}.
\ee
From Eq.~(\ref{eq:nal}) one gets $\ell a_\ell \le n$. Using this along with $c(g)=\sum_{\ell=1}^{n} a_\ell$  gives
\be
\gamma \ge \exp{\left[ A\sqrt{n} - c(g) \log{n} \right]}.
\ee
Thus $\gamma \ge 1$ whenever $A\sqrt{n} - c(g) \log{n}\ge 0$.
\end{proof}
We conclude that the row sampling problem 
for permutations $g$ with $A\sqrt{n}/\log{n}$ or less cycles
obeys properties (1,2). Proving or disproving property (3)--- average-case hardness of computing
$P_g(\lambda)$ for random partitions $\lambda \vdash n$---is left as a challenge for future work. 

\section{Classical MPS Algorithm for Kostka Numbers}
\label{sec:Kostka}

In Section~\ref{sec:classical} we showed that 
%SBB3: fixed a typo
Crichigno and Prakash's connection between quantum spin chains and characters of the symmetric group can be leveraged in an MPS algorithm for charaters. In ~\cite{crichigno2024quantumspinchainssymmetric} they also prove mappings between spin chains and Kostka numbers and Littlewood-Richardson coefficients. Using similar ideas to characters, we will show that this correspondence can be used to give an MPS algorithm for Kostka numbers.

Kostka numbers appear frequently in combinatorics and representation theory. Let $\lambda \vdash n$. A semistandard Young tableau of shape $\lambda$ is a filling of the boxes in the Young diagram corresponding to $\lambda$ with positive integers between $1$ and $n$ such that the labels are non-decreasing across rows and strictly increasing down columns. The weight of such a tableau is an integer-valued vector of length $n$ such that the $i$th entry is the number of occurrences of $i$. The Kostka number $K_{\lambda, \mu}$ is the number of semistandard Young tableau of shape $\lambda$ and weight $\mu$. Note that $K_{\lambda, \mu}$ is independent of the order of the entries in $\mu$ and thus we assume that $\mu$ is a partition of $n$. A notable special case is $K_{\lambda, (1^n)}$, which is the dimension of the irrep labeled by $\lambda$. More generally, $K_{\lambda,\mu}$ appears in the representation theory of $S_n$ as the multiplicity of the irrep $\lambda$ in the permutation module indexed by $\mu$. In the language of symmetric polynomials, the Kostka numbers encode the change of basis from Schur polynomials to complete symmetric polynomials.

%SBB3: fixe a typo
Crichigno and Prakash use this latter characterization to demonstrate that quantum spin chains encode Kostka numbers as well~\cite{crichigno2024quantumspinchainssymmetric}. We will briefly review their construction here. Again let the initial state $|x_\tau\ra$ correspond to the integer sequences $(n-1,n-2,\ldots,0)$ in the second quantization picture. For commuting variables, the complete symmetric polynomial $h_k$ is given by
\begin{align}
    h_k = \sum_{1 \leq i_1 \cdots \leq i_k} x_{i_1}x_{i_2}\cdots x_{i_k}\ .
\end{align}
For example, in three variables $h_2 = x_1^2 + x_2^2 + x_3^2 + x_1x_2 + x_1x_3 + x_2x_3$. 
%SBB3: fixed a typo
In Crichigno and Prakaksh's construction the commutative variables are replaced with operators $\hat{x}_i := a_{i+1}^\dagger a_i$, where $a$ ($a^\dagger$) is again an annihilation (creation) operator. Then, the non-commutative complete symmetric polynomials take the form
\begin{align}\label{eq:hk}
    \hat{h}_k = \sum_{i_1 > \cdots > i_k} \hat{x}_{i_1}\hat{x}_{i_2}\cdots \hat{x}_{k}\ .
\end{align}
Note that the indices are required to be strictly decreasing, which follows from $\hat{x}_i^2=0$ for any index $i$. Say that $\mu \vdash n$ is a partition of length $m$. We define $\hat{h}_\mu = \prod_{j=1}^{m} \hat{h}_{\mu_j}$ (the operators commute and thus the order of the product does not matter~\cite{crichigno2024quantumspinchainssymmetric}). The authors show that 
\begin{align}
    K_{\lambda,\mu} = \la x_{\lambda+\tau}|\hat{h}_\mu|x_\tau\ra\ .
\end{align}
Note that $\hat{h}_\mu$ moves fermions at most $n$ sites to the right and we can again truncate the chain to $2n$ sites with the initial state $|x_\tau\ra = |1^n0^n\ra$.

Similarly to the MPS algorithm for characters, we will represent the $2n$-qubit state
\begin{align}
    | \psi_\mu\ra = \prod_{i=1}^{m}\hat{h}_{\mu_i}|1^n0^n\ra
\end{align}
by an MPS. We do so via the following lemma.

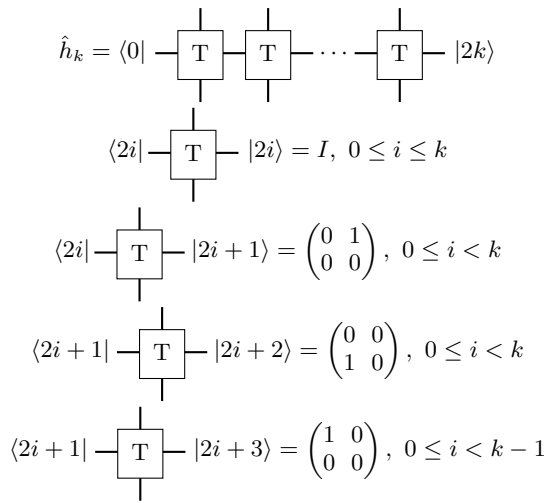
\begin{figure}[h]
    \begin{tikzpicture}
    \node at (-0.7,0.15) {$\hat{h}_k = \langle 0 |$};
    \filldraw[black,thick](0,0.1) -- (0.3,0.1);
    \draw [black] (0.3,-0.25) rectangle (0.9,0.4);
    \node at (0.6,0.1) {T};
    \filldraw[black,thick](0.6,0.4) -- (0.6, 0.7);
    \filldraw[black,thick](0.6, -0.25) -- (0.6, -0.55);
    % next square
    \filldraw[black,thick](0.9,0.1) -- (1.2,0.1);
    \draw [black] (1.2,-0.25) rectangle (1.8,0.4);
    \node at (1.5,0.1) {T};
    \filldraw[black,thick](1.5,0.4) -- (1.5, 0.7);
    \filldraw[black,thick](1.5, -0.25) -- (1.5, -0.55);
    \filldraw[black,thick](1.8,0.1) -- (2.1,0.1);
    \node at (2.4,0.09) {$\cdots$};
    % final square
    \filldraw[black,thick](2.65,0.1) -- (2.95,0.1);
    \draw [black] (2.95,-0.25) rectangle (3.55,0.4);
    \node at (3.25,0.1) {T};
    \filldraw[black,thick](3.25,0.4) -- (3.25, 0.7);
    \filldraw[black,thick](3.25, -0.25) -- (3.25, -0.55);
    \filldraw[black,thick](3.55,0.1) -- (3.85,0.1);
    \node at (4.25,0.12) {$|2k\rangle$};
\end{tikzpicture}
    \begin{tikzpicture}
    \node at (-0.3,0.12) {$\langle 2i |$};
    \filldraw[black,thick](0,0.1) -- (0.3,0.1);
    \draw [black] (0.3,-0.25) rectangle (0.9,0.4);
    \node at (0.6,0.1) {T};
    \filldraw[black,thick](0.6,0.4) -- (0.6, 0.7);
    \filldraw[black,thick](0.6, -0.25) -- (0.6, -0.55);
    \filldraw[black,thick](0.9,0.1) -- (1.2,0.1);
    \node at (2.65,0.12) {$|2i\rangle = I,\  0\leq i \leq k$};
\end{tikzpicture}
    \begin{tikzpicture}
    \node at (-0.3,0.12) {$\langle 2i |$};
    \filldraw[black,thick](0,0.1) -- (0.3,0.1);
    \draw [black] (0.3,-0.25) rectangle (0.9,0.4);
    \node at (0.6,0.1) {T};
    \filldraw[black,thick](0.6,0.4) -- (0.6, 0.7);
    \filldraw[black,thick](0.6, -0.25) -- (0.6, -0.55);
    \filldraw[black,thick](0.9,0.1) -- (1.2,0.1);
    \node at (3.35,0.12) {$|2i+1\rangle = \begin{pmatrix} 0 & 1 \\ 0 & 0 \end{pmatrix},\ 0 \leq i < k$};
\end{tikzpicture}
    \begin{tikzpicture}
    \node at (-0.62,0.12) {$\langle 2i+1 |$};
    \filldraw[black,thick](0,0.1) -- (0.3,0.1);
    \draw [black] (0.3,-0.25) rectangle (0.9,0.4);
    \node at (0.6,0.1) {T};
    \filldraw[black,thick](0.6,0.4) -- (0.6, 0.7);
    \filldraw[black,thick](0.6, -0.25) -- (0.6, -0.55);
    \filldraw[black,thick](0.9,0.1) -- (1.2,0.1);
    \node at (3.35,0.12) {$|2i+2\rangle = \begin{pmatrix} 0 & 0 \\ 1 & 0 \end{pmatrix},\ 0 \leq i < k$};
\end{tikzpicture}
    \begin{tikzpicture}
    \node at (-0.62,0.12) {$\langle 2i+1 |$};
    \filldraw[black,thick](0,0.1) -- (0.3,0.1);
    \draw [black] (0.3,-0.25) rectangle (0.9,0.4);
    \node at (0.6,0.1) {T};
    \filldraw[black,thick](0.6,0.4) -- (0.6, 0.7);
    \filldraw[black,thick](0.6, -0.25) -- (0.6, -0.55);
    \filldraw[black,thick](0.9,0.1) -- (1.2,0.1);
    \node at (3.65,0.12) {$|2i+3\rangle = \begin{pmatrix} 1 & 0 \\ 0 & 0 \end{pmatrix},\ 0 \leq i < k-1$};
\end{tikzpicture}
    \caption{MPO representation of the non-commuting complete symmetric polynomial $\hat{h}_k$. Horizontal lines
    represent virtual indices of dimension $2k+1$. We label basis vectors of $\Cbb^{2k+1}$ by 
     integers $\{0,1,\ldots,2k\}$. The MPO consists of $2n$ copies of $T$. We list all nonzero components of the tensor $T$.}
    \label{fig:mpo_hk}
 \end{figure}

\begin{lemma}
    The non-commuting complete symmetric polynomial $\hat{h}_k$ is an Matrix Product Operator (MPO) with bond dimension $2k+1$.
\end{lemma}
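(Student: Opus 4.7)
The plan is to show that contracting $2n$ copies of the tensor $T$ defined in Figure~\ref{fig:mpo_hk}, with boundary vectors $\langle 0|$ on the left and $|2k\rangle$ on the right, reproduces $\hat{h}_k$. First I would simplify the local operator: as in the derivation of Section~\ref{sec:spin_chain}, the Jordan--Wigner strings of $a_{i+1}^\dag$ and $a_i$ overlap on sites $0,\ldots,i-1$ and cancel, while the single residual $Z_i$ collapses against $\sigma^-_i=|0\rangle\langle 1|_i$ since $Z|0\rangle=|0\rangle$. Hence $\hat{x}_i=a_{i+1}^\dag a_i=\sigma^+_{i+1}\sigma^-_i$ is a genuinely local two-site operator, and any product $\hat{x}_{i_1}\cdots\hat{x}_{i_k}$ with $i_1>\cdots>i_k$ is a bona fide tensor product of single-site operators across all $2n$ sites.

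Next I would interpret the MPO as a finite automaton sweeping the chain from left to right, with virtual state $2j$ (for $0\le j\le k$) meaning ``$j$ hoppings completed, currently idle'' and virtual state $2j+1$ (for $0\le j<k$) meaning ``hopping $j+1$ has been initiated by an annihilation, the matching creation on the next site is pending.'' The nonzero entries of $T$ listed in Figure~\ref{fig:mpo_hk} correspond exactly to: remain idle (identity, $2j\to 2j$), start a new hopping at the current site (apply $\sigma^-$, $2j\to 2j+1$), finish the pending hopping at the current site (apply $\sigma^+$, $2j+1\to 2j+2$), or finish the pending hopping \emph{and} immediately initiate the next one at the same site (apply $|0\rangle\langle 0|$, $2j+1\to 2j+3$). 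The boundary conditions $\langle 0|$ and $|2k\rangle$ force exactly $k$ hoppings to be initiated and completed along any nonzero path.

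Each such path is in bijection with a strictly increasing sequence $j_1<j_2<\cdots<j_k$ of sites at which $\sigma^-$ is applied, which after reversal yields exactly one term $\hat{x}_{i_1}\cdots\hat{x}_{i_k}$ of $\hat{h}_k$ with $i_\ell=j_{k-\ell+1}$. Summing the contributions of all paths reproduces the sum in Eq.~(\ref{eq:hk}), and the bond dimension is $|\{0,1,\ldots,2k\}|=2k+1$, as claimed.

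The only real subtlety is the consecutive-index case $j_{\ell+1}=j_\ell+1$, in which the tensor places $|0\rangle\langle 0|$ on the shared site $j_\ell+1$ rather than a naive $\sigma^+\sigma^-$ on that site. I would verify this by direct computation: in $\hat{x}_{j_\ell+1}\hat{x}_{j_\ell}=\sigma^+_{j_\ell+2}\sigma^-_{j_\ell+1}\sigma^+_{j_\ell+1}\sigma^-_{j_\ell}$ the two middle factors on the shared site compose as $\sigma^-\sigma^+=|0\rangle\langle 1|\cdot|1\rangle\langle 0|=|0\rangle\langle 0|$, matching the tensor entry. The well-separated case $j_{\ell+1}>j_\ell+1$ is immediate because the automaton inserts identities between the two hoppings via the self-loop $2j\to 2j$. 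I expect this consecutive-index bookkeeping to be the main point requiring care; the rest is a routine verification that paths through the automaton are in bijection with admissible index tuples.
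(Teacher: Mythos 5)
Your proposal is correct and follows the same route as the paper's proof, which simply asserts that contracting $2n$ copies of the tensor $T$ from Fig.~\ref{fig:mpo_hk} yields $\hat{h}_k$; you supply the verification the paper leaves implicit, including the two points that actually require care (that $\hat{x}_i=\sigma^+_{i+1}\sigma^-_i$ is local after the Jordan--Wigner strings cancel, and that the $\langle 2i+1|T|2i+3\rangle=|0\rangle\langle 0|$ entry correctly reproduces $\sigma^-\sigma^+$ on the shared site when consecutive indices occur in the decreasing-order product of Eq.~(\ref{eq:hk})). No gaps.
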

\begin{proof}
    Fig.~\ref{fig:mpo_hk} defines a tensor $T\in\mathbb{C}^2\otimes\mathbb{C}^{2k+1}\otimes\mathbb{C}^{2k+1}\otimes\mathbb{C}^2$, where $\mathbb{C}^2$ represents physical indices and $\mathbb{C}^{2k+1}$ represents virtual indices of the MPO. Taking $2n$ copies of $T$ and contracting the virtual indices gives an operator
    \begin{align}
        \sum_{i_1 > i_2 > \ldots > i_k} \hat{x}_{i_1}\hat{x}_{i_2}\cdots\hat{x}_{i_k}\ .
    \end{align}
    This is exactly $\hat{h}_k$ as defined in Eq.~\ref{eq:hk}.
\end{proof} 

\begin{figure}[h]
  \includegraphics[height=6cm]{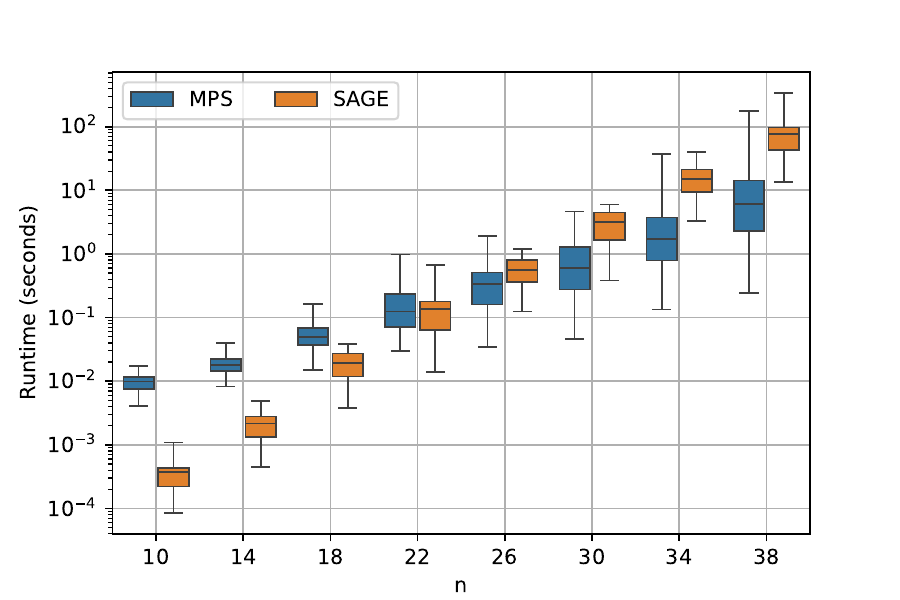}
     \caption{Runtime for computing Kostka numbers $K_{\lambda, \mu}$. Each runtime is, fixing a random $\mu$, for computing the set $\{K_{\lambda, \mu}\}_\lambda$. One hundred random $\mu$'s of length less than $n/3$ were sampled for each $n$. The MPS compression error is set to $\varepsilon = 10^{-12}$ for all the data. Boxes indicate the first and third quartiles and the whiskers denote $1.5$ times the inter-quartile range.}
     \label{fig:kostka_runtime}
\end{figure}

From the bond dimension  of this MPO, we can upper bound the maximum bond dimension of all states during compression as
\begin{align}
    D \leq \prod_{j=1}^{m} (2\mu_j+1)\ .
\end{align}
In running numerics, however, bond dimensions appeared to be significantly lower. 

Denote the compressed MPS state as $| \mathsf{mps}_{\mu} \ra$. Similarly to characters, Kostka numbers take integer values and thus the correct value is recovered as long as 
\begin{align}
    \vert K_{\lambda, \mu} - \la \lambda | \mathsf{mps}_{\mu}\ra | < \frac{1}{2} \ .
\end{align}

We again used $\mathsf{mpnum}$ software for numerical simulations with an error rate $\varepsilon = 10^{-11}$. We compared the MPS algorithm to the $\mathsf{symmetrica}$ library of $\mathsf{SAGE}$, see Fig.~\ref{fig:kostka_runtime}.

For numerical experiments we selected random partitions $\mu$ with a smaller number of parts, fewer than $n/3$. On longer partitions the MPS algorithm appears to require smaller $\varepsilon$ and longer run times. For example, $\mu=(1,\ldots,1)$ proved difficult to compute via the MPS algorithm yet can be quickly computed via the hook length formula. However, in the regime of shorter partitions, our numerics offer evidence that the MPS is competitive and often faster than $\mathsf{Sage}$ for large $n$. It is further worth noting that most partitions are relatively short, the average being $\Theta(\sqrt{n}\log(n))$~\cite{erdos1941distribution}. Our implementation, as well as one for skew Kostka numbers (a generalization), can be found at~\cite{character_builder}. 

An interesting question, which we leave for future work, is if Littlewood-Richardson coefficients can be recovered by a non-trivial MPS algorithm. Doing so via a mapping to spin chains would require MPO constructions for non-commutative Schur polynomials~\cite{crichigno2024quantumspinchainssymmetric}. Schur polynomials do not admit an obvious modular construction unlike $\prod_{j=1}^{c(g)}J_{\eta_j}$ and $\prod_{j=1}^m \hat{h}_{\mu_j}$ (from the commuting products). This would seem to complicate MPO constructions.

% Let $\nu \vdash m$ and $\lambda \vdash m+n$. Further assume that $\lambda_i - \mu_i \geq 0$ for all $1 \le i \le \ell(\nu)$ (graphically this corresponds to all of the boxes in $\eta$ being contained in $\Lambda $ as well). Then, $\lambda \backslash\nu = (\lambda_1 - \nu_1, \lambda_2 - \nu_2, \cdots)$ is the skew diagram formed by removing the boxes of $\nu$ from $\lambda$. Similarly to normal tableaux, we say that a filling of the boxes with integers between $1$ and $n$ is semistandard if the entries are non-decreasing across rows and strictly increasing down columns. Let $\mu \vdash n$. Then, the skew Kostka number $K_{\lambda \backslash \nu, \mu}$ is the number of semistandard skew tableau of shape $\lambda \backslash \nu$ and weight $\mu$.

% Crinchigno and Prakaksh show that 
% \begin{align}
%     K_{\lambda \backslash \nu, \mu} = \la\lambda| \hat{h}_{\mu} | \nu \ra\ .
% \end{align}
% Thus, our MPO implementation of $\hat{h}_k$ can be used to compute skew Kostka numbers as well. Note that the number of sites needs to be increased from $2n$ to $2(m+n)$ to account for the fact that $\lambda \vdash m+n$.

%SBB: new section
%DG: added acknowledgment
\begin{acknowledgments}       
        This research was supported in part by Perimeter Institute for Theoretical Physics. Research at Perimeter Institute
is supported by the Government of Canada through the Department of Innovation, Science, and Economic
Development, and by the Province of Ontario through the Ministry of Colleges and Universities. DG is a fellow of the Canadian Institute for Advanced Research, in the quantum information science program. LS was supported by IBM through the Illinois-IBM Discovery Accelerator Institute.
\end{acknowledgments}

\bibliographystyle{unsrt}
\bibliography{mybib}
\end{document}